\def\k{\kern .5em}
\def\er{\kern .2em}
\newcommand{\comm}[1]{{\color{red}#1}}
\newcommand{\be}{\begin{equation}}
\newcommand{\ee}{\end{equation}}
\newcommand{\ba}{\begin{array}}
\newcommand{\ea}{\end{array}}
\newcommand{\bea}{\begin{eqnarray*}}
\newcommand{\eea}{\end{eqnarray*}}
\newcommand{\bean}{\begin{eqnarray}}
\newcommand{\eean}{\end{eqnarray}}
\newcommand{\ddt}{{\frac{\mathrm{d}}{\mathrm{d}t}}}
\newcommand{\Rnp}{\mathbb{R}^{n \times p}}
\newcommand{\Mnp}{\mathcal{M}_n^p}
\newcommand{\T}{\mathcal{T}}
\newcommand{\cR}{\mathcal{R}}
\newcommand{\st}{\mbox{s.t.}}
\newcommand{\tr}{\mathrm{tr}}
\newcommand{\QR}{\mathrm{QR}}
\newcommand{\WY}{\mathrm{WY}}
\newcommand{\iprod}[2]{\left \langle #1, #2 \right \rangle }
\newtheorem{Theorem}{Theorem}[section]
\newtheorem{Proposition}{Proposition}[section]
\newtheorem{Assumption}{Assumption}[section]
\newcommand{\lc}{\mathrel{\raise2pt\hbox{${\mathop<\limits_{\raise1pt\hbox{\mbox{$\sim$}}}}$}}}
\newcommand{\gc}{\mathrel{\raise2pt\hbox{${\mathop>\limits_{\raise1pt\hbox{\mbox{$\sim$}}}}$}}}
\newcommand{\ec}{\mathrel{\raise1pt\hbox{${\mathop=\limits_{\raise2pt\hbox{\mbox{$\sim$}}}}$}}}
\title{Gradient type optimization methods for electronic structure calculations
\thanks{The work of X. Zhang, J. Zhu and A. Zhou was supported by the Funds for Creative Research Groups of China under
grant 11021101, the National Basic Research Program of China under
grant 2011CB309703, and the National Center for Mathematics and
Interdisciplinary Sciences, Chinese Academy of Sciences. The work of Z. Wen was
supported in part by NSFC grant 11101274.}}
\author{Xin Zhang\thanks{LSEC,
Institute of Computational Mathematics and Scientific/Engineering
Computing, Academy of Mathematics and Systems Science, Chinese
Academy of Sciences, Beijing 100190, China (xzhang,jwzhu,azhou@lsec.cc.ac.cn).}
\and Jinwei Zhu$^\dagger$
\and Zaiwen Wen\thanks{Beijing International Center for Mathematical Research,
Peking University, Beijing 100871, China. }
\and Aihui Zhou$^\dagger$
}
\begin{document}

\maketitle

\begin{abstract}
The density functional theory (DFT) in electronic structure calculations can be
formulated as either a nonlinear eigenvalue or direct minimization problem. The
most widely used approach for solving the former  is the so-called
self-consistent field (SCF) iteration. A common observation is that  the
convergence of SCF is not clear theoretically while approaches with convergence
guarantee for solving the latter are often not competitive to SCF numerically.
In this paper, we study gradient type methods for solving the direct
minimization problem by constructing new iterations along the gradient on the
Stiefel manifold. Global convergence (i.e., convergence to a stationary point from any
 initial solution) as well as local convergence rate follows
from the standard theory for optimization on manifold directly.  A major computational
advantage is that the computation of linear eigenvalue problems is no longer needed. The main costs of our approaches arise from the assembling of the total energy functional and its gradient and the projection onto the manifold. These tasks are  cheaper than eigenvalue computation and they are often  more suitable for parallelization as
long as the evaluation of the total
energy functional and its gradient is efficient.
Numerical results show that they can
outperform SCF consistently on many practically large
systems. 
\end{abstract}

\begin{keywords}
density functional model, electronic structure calculation, gradient-type methods,
nonlinear eigenvalue problem, total energy minimization
\end{keywords}

\begin{AM}
65N25, 65N30, 65N50, 90C30
\end{AM}

\pagestyle{myheadings}

\thispagestyle{plain}

\markboth {XIN ZHANG, JINWEI ZHU, ZAIWEN WEN AND AIHUI ZHOU} {OptESC}

\section{Introduction}\label{sec:intro}
\setcounter{equation}{0}
Electronic structure calculations have been widely used in chemistry,
materials science, drug design and nano-science over the past decades
because of its great advantage in predicting phase transformation
in various materials and a few other useful properties, such as optics, electric conductivity
and magnetics \cite{martin04}. The computational complexity of simulations using the
many-body Schr\"{o}dinger equation is extremely expensive. One of the most
fundamental advances is the Density Functional Theory (DFT) models. These
models can be divided into two classes: Orbital Free DFT
(OFDFT) model and Kohn-Sham DFT (KSDFT) model \cite{kohn-sham65, martin04, saad-chelikowshy-shontz-2010, wang-govind-carter99}. Both
of them can be formulated as either nonlinear eigenvalue  or direct
minimization problems under orthogonality constraints of wave functions,
where the former corresponds to the first-order
necessary optimality condition of the latter.

There are several approaches for discretizing the DFT  models. The plane
wave basis has been  popular  due to its advantage
in expressing the kinetic energy term and the Hartree potential in simple forms.
It has been used in a few softwares, such as ABINIT \cite{abinit} and VASP
\cite{vasp}. However, this type of basis is often suitable for periodic
systems which does not contain the realistic cases like vacancy and dislocation,
and it is usually not suitable for isolated systems, such as molecules and clusters.
Since the plane wave basis is globally defined in real space, it
may not be efficient for massively parallel computing \cite{motamarri2012, schauer-linder2013}.
The second type of approach is the so-called atom-centered basis set. Since
a few number of basis can provide satisfying results, it has been applied in
some other  softwares, such as Gaussian \cite{GAUSSIAN} and SIESTA \cite{sieata}.
On the other hand, it is not easy to construct a practically complete basis
\cite{losilla-sundholm2012}.
Another popular discretization is the real space
approaches, including the finite difference, finite element, finite volume,
and wavelet methods \cite{chen-gong-he-yang-zhou13, chen-he-zhou11, chen-gong-zhou10, dai-gong-yang-zhang-zhou11, fang-gao-zhou12,
gong-shen-zhang-zhou08, hirose-ono-fujimoto-tsukamoto05, motamarri2012, pask-sterne2005,
torsti2006, Tsuchida-tsukada95, zhang-shen-zhou-gong08, zhou07}.
They can handle
computational domains with complicate geometries and diversified
boundary conditions. In particular, the whole domain can be divided into low and
high resolution parts according to the frequency of the wave functions. 
A complete basis sets can always be chosen. Although their degrees of freedom are large, the
adaptive methods and multilevel methods 
\cite{martin04} can
be applied to reduce the computational complexity.

The most widely used method for solving the nonlinear eigenvalue formulation
is the so-called self-consistent field (SCF) iteration. Despite its popularity
there are two well known challenges in SCF. First, its computational cost is
dominated by the eigenvalue computation. For a system with  $N$ electrons,
the $N$ smallest eigenvalues and their associated eigenvectors of the Hamiltonian
must be computed at each iteration. Second, the convergence of SCF is
not guaranteed either theoretically or numerically. Its performance is
often unpredictable for large
scale systems with small band gaps, even with the assistance of the charge
density or potential mixing techniques \cite{johnson88,pulay80}. On the other
hand, optimization methods for direct minimization problems have been proposed
for electronic structure calculations \cite{
BendtZunger1982, Li-Nunes-Vanderbilt1993, millam-Scuseria1997, PayneTeterAriasJoannopoulos1992, Pfrommer1999, saad-chelikowshy-shontz-2010, VanHead2002}.
Trust region methods \cite{Francisco2004, FranciscoMartine2006,
Thogersen2004} substitute the linear eigenvalue problem in SCF
by the so-called trust-region subproblems, in which the objective function are
local quadratic approximations to the total energy functional.  Monotonic
reduction of the total energy can be achieved under a suitable update of the
trust-region radius. A direct constrained minimization (DCM) algorithm is
designed in \cite{yang-meza-wang07}, where the new search direction is built
from a subspace spanned by the current approximation to the optimal wave
function, the preconditioned gradient and the previous search direction.
Although many of these optimization based methods often have better theoretical convergence
 properties than SCF,  a common observation is that they are not competitive
 to SCF when the latter works and  they may converge slowly on large
 scale systems \cite{kresse-furtthmuller1996}.

In this paper, we study gradient type optimization methods
 derived from  the direct minimization formulation to solve both OFDFT and KSDFT models.
Essentially, our approaches construct new trial points along the gradient on the
Stiefel manifold, i.e., a manifold consisted of orthogonal matrices. The
orthogonality is preserved by   an operation called retraction
\cite{opt-manifold-book} on the Stiefel manifold. Consequently, all theoretical properties of
optimization on manifold  \cite{opt-manifold-book} can be applied to our approaches
 naturally.  A major computational advantage is that  eigenvalue
computation is no longer needed. The main costs of our approaches arise from the
assembling of the total energy functional and its gradient on manifold and the execution of retractions. These
tasks are   cheaper than eigenvalue computation and they are
often more suitable for parallelization. The numerical performance of our
gradient methods is further improved by the state-of-the-art
acceleration techniques such as Barzilai-Borwein steps and non-monotone
line search with global convergence guarantees.
  Our approaches can quickly reach the vicinity of an optimal solution and produce
a moderately accurate approximation, at least in our numerical examples. 

Our main contribution is the demonstration that the simple gradient type
methods can outperform SCF consistently on many practically meaningful systems
based on real space discretization.  Two types of retractions on Stiefel
manifold are investigated. One preserves  orthogonality by using a
Crank-Nicolson-like scheme proposed in \cite{zai-yin} for minimizing a
general differentiable function on Stiefel manifold.
The other adopts the QR factorization for orthogonalization explicitly with
respect to a gradient step on the Stiefel manifold. Due to
the orthogonality, every intermediate point along the gradient direction of the
Stiefel manifold has full rank which ensures the stability when the
Cholesky factorization is used to compute the QR factorization.
To the best of the authors' knowledge,  it is the first time that the QR-based
method is systematically studied for electronic
structure calculations.
Our methods are implemented within the software packages
Octopus \cite{octopus} for KSDFT and RealSPACES (developed by the State Key Laboratory of
Scientific and Engineering Computing of the Chinese Academy of Sciences)  for
OFDFT, respectively.
Numerical experiments show that they can be more efficient and robust than
SCF on many instances. High parallel scalability  can also be achieved.
 However, we should point out that their efficiency still depends on an
efficient evaluation of the total energy functional and their gradients.
Although the regularized SCF method in \cite{WenMilzarekUlbrichZhang2013}
can converge faster by using the
Hessian of the total energy functional,
it requires the gradient methods to compute the search direction from some
quadratic or cubic subproblems.


The rest of the paper is organized as follows. In Section
\ref{sec:prob}, we present the mathematical models of DFT models, their
associated discretization and the SCF iteration algorithm. Our gradient
type algorithms are described in Section~\ref{sec:alg}. Numerical results
are reported in Section~\ref{sec:num} to illustrate the efficiency of our
algorithms. Finally, some concluding remarks are given in Section
\ref{sec:conclusion}.

\section{Preliminaries} \label{sec:prob}
\subsection{The KSDFT model}
The KSDFT model was developed based on the theories established by Hohenberg, Kohn
and Sham \cite{kohn-sham65}. The essence of KSDFT
 is treating the many-body system as an equivalent system with
non-interacting electrons in an effective mean field governed by the
charge density \cite{motamarri2012}. Using this representation, KSDFT can
describe the ground state exactly in principle using single-electron
wave functions.

For a system with $M$ nuclei and $N$ electrons, let $r$ represent the
spatial coordinate in three dimensions. The charge density under the KSDFT
model is expressed as $\rho(r)=\sum_{i=1}^{N}|\phi_i(r)|^2,$
where $\phi_i(r)$ is the $i$-th wave function associated with the
non-interactive auxiliary system under the orthogonality constraint
$
\int_{\mathbb{R}^3}\phi_i(r)\phi_j(r)\mathrm{d}r=\delta_{ij},
$
where $\delta_{ij}$ is the Dirac delta function. The kinetic energy of the
system can be expressed by the $N$ independent orthonormalized wave
functions
\[
T_{KS}=\frac1{2}\sum_{i=1}^{N}\int_{\mathbb{R}^3}|
\nabla\phi_i(r)|^2\mathrm{d}r.
\]
The Hartree energy is written as
\[
E_H(\rho)=\frac{1}{2}\int_{\mathbb{R}^3}V_{H}(\rho)\rho(r)
\mathrm{d}r,
\]
where $V_{H}(\rho)=\displaystyle\int_{\mathbb{R}^3}\frac{\rho(r')}{|r-r'|}
\mathrm{d}r'$ is the classical electrostatic average interaction between
electrons.  The exchange-correlation energy is defined as
\[
E_{xc}(\rho)=\int_{\mathbb{R}^3}\varepsilon_{xc}(\rho)\rho(r)\mathrm{d}r,
\]
where $\varepsilon_{xc}(\rho)$ is the exchange-correlation functional used to
describe the 
quantum interaction between electrons \cite{yang-meza-wang07}.
Let $V_{ext}(r)$ be the external potential, the associated external
potential energy is
\[
E_{ext}=\int_{\mathbb{R}^3}V_{ext}(r)\rho(r)\mathrm{d}r.
\]
Therefore, the total energy of the system is
\begin{equation}
\begin{aligned}\label{eq:EKS}
  E_{KS}(\{\phi_i\})=
  T_{KS}+E_{H}+E_{xc}+E_{ext}+E_{II},
\end{aligned}
\end{equation}
where $E_{II} = \displaystyle\frac1{2}\sum_{i,j=1, i\neq j}^M\frac{Z_iZ_j}{|R_i-R_j|}$
is the nuclei interaction energy and $Z_i$ and $R_i$ are the atomic number and coordinate of the $i$-th nucleus, respectively. For more
details  we refer the reader to \cite{martin04}. Then finding the ground state
energy of the system is equivalent to solving the following minimization problem:
\begin{equation}\label{eq:KSinf}
  \begin{aligned}
   E_{KS}^{0}= \inf_{\phi_i\in
  H^{1}(\mathbb{R}^3) } & \quad  E_{KS}(\{\phi_i\}) \\
       \st \quad & \int_{\mathbb{R}^3}\phi_i(r)\phi_j(r)\mathrm{d}r=
  \delta_{ij},\quad 1\leqslant i,j \leqslant N.
  \end{aligned}
\end{equation}
Alternatively, one  solves the so-called KS equation
\begin{align}\label{eq:KS}
\begin{cases}
  \mathcal{H}_{KS}\phi_i\triangleq\left(-\displaystyle\frac{1}{2}
  \Delta+V_{ext}(r)+
  V_H(\rho)+V_{xc}(\rho)\right)\phi_i= \displaystyle\sum_{j=1}^{N}
  \lambda_{ij}\phi_j,\\
  \displaystyle\int_{\mathbb{R}^3}\phi_i(r)\phi_j(r)\mathrm{d}r=\delta_{ij},
\end{cases}
\end{align}
where $\displaystyle V_{xc}(\rho)=\frac{\delta E_{xc}(\rho)}{\delta\rho}$,
$i,j=1,2,\cdots,N$. The KS equation \eqref{eq:KS}
actually corresponds to the first-order necessary optimality condition of
\eqref{eq:KSinf}. It is a nonlinear eigenvalue
problem since the Hamiltonian operator $\mathcal{H}_{KS}$ is
a nonlinear operator with respect to the charge density $\rho$.

Although the KSDFT model has been very successful in many aspects, there
still exists a few difficulties. First, it can be computationally demanding
because of the determination of the $N$ wave functions whose cost is dominant.
Second, the wave functions usually oscillate rapidly near the nuclei
in realistic systems. Consequently, pseudopotential algorithm is
always introduced. Third, approximations are needed for the unknown
analytic formula of $V_{xc}(\rho)$ \cite{perdew-burke96, perdew-wang92}.


\subsection{The OFDFT model}

The  OFDFT model adopts the kinetic energy density functional (KEDF) as the kinetic energy
term instead of the one in KSDFT. Hence, $N$ independent
single-electron wave functions are no longer needed in OFDFT.
One of the most successful orbital-free models is the so-called
Thomas-Fermi-von Weizas\"{a}cker (TFW) model, whose kinetic energy
has the following representation
\begin{align}\label{eq:TFW}
T_{TFW}(\rho) = C_{TF}T_{TF}(\rho)+\mu T_{vW}(\rho),
\end{align}
where $C_{TF}=\displaystyle \frac3{10}(3\pi^2)^{\frac2{3}}$, $T_{TF}(\rho)
=\displaystyle\int_{\mathbb{R}^3}\rho^{\frac5{3}}(r)\mathrm{d}r$, $T_{vW}(\rho)=
\displaystyle\frac1{8}\int_{\mathbb{R}^3}\frac{|\nabla\rho|^2}{\rho}\mathrm{d}r$
and $\mu$ is the parameter that obtained based on physical experiments
or theoretical analysis. Since TFW model considers the gradient of the
charge density, it can deal with heterogeneous systems, especially for
diatomic systems.
A few other KEDFs were proposed in
\cite{wang-teter92, wang-govind-carter99} and references therein
in order to satisfy the linear response theory.
They share a common formula
\begin{align}\label{eq:reponseTFW}
T_{LR}(\rho)=T_{TF}(\rho) + \mu T_{vW}(\rho) + C_{TF}
\int_{\mathbb{R}^3}\int_{\mathbb{R}^3}K(r-r')\rho^{\alpha}(r)
\rho^{\beta}(r')\mathrm{d}r\mathrm{d}r',
\end{align}
where $K(r-r')$ is chosen such that $T_{LR}(\rho)$ satisfies the Lindhard
susceptibility function but using different
parameters $\alpha$ and $\beta$.
By abuse of notation, we use $T_{OF}(\rho)$ to denote either $T_{TFW}(\rho)$
or $T_{LR}(\rho)$. Subsequently, the total energy can be written as
\begin{align}\label{eq:OFeng}
E_{OF}(\rho)=T_{OF}(\rho)+E_{ext}(\rho)+E_{H}(\rho)+E_{xc}(\rho)+E_{II},
\end{align}
where $E_{ext}(\rho)$, $E_{H}(\rho)$, $E_{xc}(\rho)$ and $E_{II}$ are defined
the same as in \eqref{eq:EKS}.
Since the charge density $\rho$  should be properly
normalized to  the number of electrons $N$ in the system, finding the ground
state energy of the system can be formulated as
\begin{align}\label{eq:OFinf}
E_{OF}^0=\inf\left\{E_{OF}(\rho):\rho\in L^1(\mathbb{R}^3),\
\rho^{\frac1{2}}\in H^1(\mathbb{R}^3),\  \rho\geq 0,\
\int_{\mathbb{R}^3}\rho(r)\mathrm{d}r = N\right\}.
\end{align}
The nonnegative constraints can be eliminated by substituting $ \rho$ by
another variable $\varphi^2$. The first-order necessary optimality condition
of \eqref{eq:OFinf} with respect to KEDF \eqref{eq:TFW} is the
nonlinear eigenvalue problem
\begin{align}\label{eq:OF_euler}
\begin{cases}
\mathcal{H}_{OF}\varphi\triangleq\left(-\displaystyle\frac{\mu}{2}\Delta+
\displaystyle\frac{\delta\Big(T_{OF}(\rho)-\mu T_{vW}(\rho)\Big)}{\delta\rho}+
V_{eff}(\rho)\right)\varphi = \lambda\varphi,\\
\displaystyle\int_{\mathbb{R}^3}\varphi^2 = N,
\end{cases}
\end{align}
where $\displaystyle\frac{\delta T_{vW}(\rho)}{\delta\rho}=-\frac1{2}
\frac{\nabla^2\varphi}{\varphi}$ and
$V_{eff}(\rho)=V_{ext}(r)+V_{H}(\rho)+V_{xc}(\rho)$.

Therefore, OFDFT model expresses the system by only using the
charge density as its variable in the spirit of
Hohenberg-Kohn theorem and avoids computing $N$ eigenpairs
\cite{carling-carter03}. Numerical experiments have shown
that OFDFT is good at simulating systems with main group
elements and nearly-free-electron-like metals with comparable
accuracy to KSDFT \cite{ho-huang-carter08, wang-govind-carter99}.
However, the accuracy of OFDFT relies on the accuracy of KEDFs and the
results may not be accurate for covalently bonded and ionic systems \cite{ho-huang-carter08}.


\subsection{The discretized DFT models}\label{subsec:discretize}

In this subsection, we describe the real space discretization for KSDFT and OFDFT,
respectively. Since going into the detailed discretizing steps would take us too far
afield, we focus on a high level description without loss of generality.
The readers are referred to \cite{chen-gong-he-yang-zhou13, chen-gong-zhou10, chen-he-zhou11,
zhou07} for theoretical analysis of these discretization schemes.


Under a chosen finite difference discretization scheme, the $N$  independent
wave functions can be represented by a matrix
\[
X=[x_1,\cdots,x_N]\in\mathbb{R}^{n\times N},
\]
where $n$ is the spatial degrees of freedom of the computational domain
$\Omega$, $\{x_i\}$ are column vectors used to approximate the wave
functions $\{\phi_i\}$ and they satisfy the orthogonality constraint
\[
X^\top X=I_N,
\]
where $I_N$ is the $N\times N$ identity matrix. The discretized charge
density associated with these $N$ occupied states can be expressed as
\[
\rho(X)=\mathrm{diag}(XX^\top),
\]
where
$\mathrm{diag}(A)$ is a column vector consisting of
diagonal entries of the matrix $A$. Hence, the discretized total energy
functional  of \eqref{eq:EKS} is
\begin{equation}
  E_{KS}(X) = \frac{1}{2}\tr(X^\top LX)+\tr(X^\top EX)+
  \frac{1}{2}\rho(X)^\top L^{\dag}\rho(X)+
  \rho(X)^\top \epsilon_{xc}(\rho(X)),
\end{equation}
where the Hermitian matrix $L\in\mathbb{R}^{n\times n}$ is an approximation to
the Laplacian operator, the diagonal matrix $E$ is an approximation of $V_{ext}$, and the discretized form of the Hartree potential can be
represented by the product of Hermitian matrix $L^\dag$ with $\rho(X)$,
where $L^\dag$ is the inverse of the discretized Laplacian operator.

Therefore, the discretized minimization problem
\eqref{eq:KSinf} is
\begin{equation}\label{eq:disKSinf}
  \begin{aligned}
    \min_{X\in \mathbb{R}^{n\times N}} E_{KS}(X),\
    \quad\st\ \  X^\top X=I_N.
  \end{aligned}
\end{equation}
Using the notation $((E_{KS})_X)_{i,j}= \frac{\partial E_{KS}}{\partial X_{ij}}$ for
the partial derivative, we obtain
\begin{equation}\label{eq:partial deriv}
  (E_{KS})_X = H(X)X,
\end{equation}
where $H(X)$ is the discretized Hamiltonian matrix
\begin{equation}
  H(X)=\frac{1}{2}L+E+\mathrm{Diag}(L^\dag\rho(X))+\mathrm{Diag}(\gamma_{xc}),
\end{equation}
where $\mathrm{Diag}(v)$ denotes a diagonal matrix with $v$ on its diagonal,
and
\[
\gamma_{xc}(\rho)=\frac{\mathrm{d}[\rho\epsilon_{xc}(\rho)]}{\mathrm{d}\rho}.
\]
Consequently, the first-order necessary optimality condition of problem
\eqref{eq:disKSinf} is
\begin{equation}\label{eq:disKS}
  \begin{aligned}    H(X)X&=X\Lambda,\\  X^\top X&=I_N,\end{aligned}
\end{equation}
where $\Lambda$ is the Lagrangian multiplier which is a symmetric
matrix. Hence,  \eqref{eq:disKS} is exactly the discretization
of the continuous KS equation \eqref{eq:KS}.

We choose the finite element method to discretize the OFDFT model.
 Assume that the finite element basis sets under a shape-regular tetrahedral conforming mesh of $\Omega$ are $\{u_k\}^{n}_{k=1}$,
where $n$ is the number of the basis functions. The wave
function $\varphi$ can be approximated  by a linear combination of these
finite element basis
\begin{align}\label{eq:FEwavefunction}
\varphi(c)=\sum^n_{t=1}c_tu_t,
\end{align}
where $c=(c_1, c_2, \cdots, c_n)^\top$. The corresponding discretized
charge density can be expressed as
\[
\rho(c)=\varphi^2(c). 
\]
The discretization of the nonlinear eigenvalue problem \eqref{eq:OF_euler} is
\be \label{eq:DiscreteOF}
\int_\Omega u_k\mathcal{H}_{OF}\varphi(c)\mathrm{d}r = \lambda\int_{\Omega}
u_k\varphi(c)\mathrm{d}r,\ \ 1\leq k\leq n, 
\ee
 Substituting
\eqref{eq:FEwavefunction} into \eqref{eq:DiscreteOF} and running over
all the indices, we obtain the following generalized eigenvalue problem
\begin{align}
H(c)c = \lambda Bc,
\end{align}
where the components of $H(c)$ and $B$ are
\[
H_{kl}(c)=\int_{\Omega}u_k\mathcal{H}_{OF}u_l\mathrm{d}r,\quad
\mathrm{and}\quad B_{kl}=\int_{\Omega}u_ku_l\mathrm{d}r.
\]
The corresponding discretization  of \eqref{eq:OFinf} is
\begin{align}\label{eq:disOFinf}
\min_{c\in\mathbb{R}^{n}} E_{OF}(\rho(c)), \quad
\st\ \  c^\top Bc = 1.
\end{align}

\subsection{The self-consistent field (SCF) iteration}\label{subsec:scf}
The nonlinear eigenvalue formulations of KSDFT and OFDFT presented in subsection
\ref{subsec:discretize} give rise to the SCF iteration
naturally. Without loss of generality, we describe the SCF iteration for
the discretized KS equation \eqref{eq:disKS}. Starting from $X_0$ with $X_0^\top X_0=I$, the basic SCF iteration
computes the $(k+1)$-th iterate $X_{k+1}$ as the solution of the linear eigenvalue
problem:
\be \label{eq:SCF} \begin{aligned}  H(X_k) X_{k+1} &=  X_{k+1} \Lambda_{k+1}, \\
  X_{k+1}^\top X_{k+1} &=  I.
  \end{aligned} \ee
The convergence of the SCF iteration can often be speeded up by the so-called charge density or potential mixing techniques.
The only difference is that the coefficient matrix $H(X_k)$  in the linear
eigenvalue problem \eqref{eq:SCF} is replaced by another matrix
$H$, which is constructed from a linear combination of the previously
computed charge densities or potentials and the one obtained from certain
 schemes at current iteration. Frequently used schemes include Anderson's mixing
\cite{anderson65}, Pulay's mixing (or DIIS) \cite{pulay80} and Broyden mixing \cite{johnson88}.
We outline the major steps of the
SCF iteration in Algorithm \ref{alg:SCF}.

\begin{algorithm2e}[H]\caption{The SCF Iteration}\label{alg:SCF}
Given an initial guess on the charge density $\rho^0_{in}$.\\


Solve the linear eigenvalue problem
\[ H(\rho_{in}^k ) X_{k+1}=X_{k+1}\Lambda_{k+1}, \quad X_{k+1}^\top X_{k+1} = I_N,\] where
  $H(\rho_{in}^k )$ is the Hamiltonian at $\rho_{in}^k$, $X_{k+1}$ are
the eigenvectors corresponding to the $N$-smallest eigenvalues of $H(\rho_{in}^k
)$.\\

Compute the new charge density $\rho_{out}^{k+1}$ from $X_{k+1}$. 
Stop if a certain termination rule is met.

Compute a new
charge density $\rho_{in}^{k+1}$ using  $\rho_{out}^{k+1}$ and
   a prior chosen  density mixing scheme and go to step 2.
 \end{algorithm2e}

 Although the SCF iteration with charge density or potential mixing often works well on many
problems, it is well known that there is no theoretical guarantee on its
convergence and it can converge slowly or even fail on certain problems \cite{yang-meza-wang07}
. On the other hand, the main
computational bottleneck of the SCF iteration is solving a sequence of linear eigenvalue
problems.
 Since the degrees of freedom $n$ is usually very large, in particular
for large-scale systems, the computational cost of eigenpairs is the
dominant factor of the SCF iteration.

\section{Gradient type algorithms for total energy minimization}\label{sec:alg}

 The minimization models \eqref{eq:KSinf} and \eqref{eq:OFinf} provide many alternative
methodologies other than solving linear eigenvalue problems repeatedly.
Since $B$ in \eqref{eq:disOFinf} is positive
definite,  \eqref{eq:disOFinf}  can be transformed to a minimization
problem under  a unit spherical constraint
as
\be \label{eq:OFDFT-y} \min_{y\in\mathbb{R}^{n}} E_{OF}(\rho(B^{-\frac{1}{2}}y)),\ \
\st\ \  \|y\|_2 = 1.
\ee
Therefore, both the KSDFT model \eqref{eq:disKSinf} and the OFDFT model \eqref{eq:OFDFT-y}  are unified under optimization with
orthogonality constraints as
\begin{equation}\label{opt}
\min_{X\in\mathbb{R}^{n\times p}} E(X),\ \
\st\ \  X^\top X = I_p,
\end{equation}
where $E(X)$ is the corresponding total energy functional  of either KSDFT or
OFDFT models and $p$ is the number
of eigenpairs needed to be calculated. Inspired
by the preliminary but promising performance of the feasible method
for \eqref{opt}  in \cite{zai-yin},
we further investigate  gradient type methods for solving OFDFT and KSDFT,
including one approach using QR factorization \cite{opt-manifold-book}. Each trial point generated by these methods is guaranteed
to be on the unit sphere or satisfy the orthogonality constraints.
These schemes are numerically efficient and let us apply state-of-the-art
acceleration techniques such as Barzilai-Borwein steps and non-monotone
line search with global convergence guarantees.

 Let $\Mnp=\{X\in\mathbb{R}^
{n\times p}: X^\top X = I\}$  be the feasible set, which is often referred to the
Stiefel manifold. When $p=1$, it reduces to the unit-sphere manifold $S^{n-1}=\{x\in\mathbb{R}^n:
\| x \|_2= 1\}$. The tangent space of $\Mnp$ at $X$ is \[ \T_X \Mnp=\{ Z \in \Rnp : X^\top Z
+ Z^\top X = 0 \}.\]
The gradient $\nabla E(X)$   at $X$ is defined as the element of  $ \T_X \Mnp$
that satisfies
\be \label{eq:def-grad}  \tr(E^\top_X \Delta )=\iprod{ \nabla E}{\Delta }_X
 := \tr\left(\nabla E^\top (I- \frac{1}{2}  X X^\top) \Delta \right), \quad
 \mbox{ for all } \Delta \in \T_X \Mnp,  \ee
 where $E_X$ is the partial derivative of $E(X)$, i.e.,
 $(E_X)_{ij} = \Big(\mathcal{D}E(X)\Big)_{ij} = \displaystyle\frac{\partial E_{KS}}{\partial
 X_{ij}}$. For the KSDFT and OFDFT models, it holds
 \be \label{eq:deriv-E-X} E_X = H(X) X,\ee
 where $H(X)$ is the corresponding Hamiltonian matrix.
 An elementary verification shows that
\be\label{eq:def-nablaE}
\nabla E(X) = E_X - X E_X^\top X.
\ee
The first-order optimality conditions of \eqref{opt} are
\[ \nabla E(X) = 0, \mbox{ and } X^\top X = I. \]

Our proposed algorithms are adapted from the classical steepest descent
method. The orthogonality constraints are preserved at a reasonable computational cost.
 Since $\nabla E$ is the gradient on the manifold, a natural idea is to compute
the next iterates  as
\be\label{eq:def-Y}
Y = X - \tau \nabla E(X),
\ee
where $\tau$ is a step size. The obstacle is that the new point $Y$ may not satisfy
$Y \in \Mnp$.
 We next describe two approaches for overcoming this difficulty.

Our first strategy uses the constraint-preserving scheme proposed in
 \cite{zai-yin} by slightly modifying the term $\nabla E(X)$.
Suppose that $X^\top X =I$ and define $W$ as a skew-symmetric matrix
\[
W:= E_X X^\top - X E_X^\top,
\]
which yields $\nabla E(X) =  W X$. Then the new trial point is generated as
  \be \label{eq:WY} X_\WY(\tau) = X -\tau W
  \left(\frac{X+X_{\mathrm{WY} }(\tau)}{2}\right). \ee
It can be shown that  $X_\WY(\tau)$ is a orthogonal matrix
and it  defines a projected gradient-like curve on the Stiefel manifold.
 \begin{lemma}[Lemmas 3 and 4 in \cite{zai-yin}] \label{lemma:skew-symm}
1) The matrix $X_\WY(\tau)$ defined by  \eqref{eq:WY}  can be expressed as  \be
\label{eq:WY1} X_\WY(\tau) = \left(I +\frac{\tau}{2} W \right)^{-1}\left(I - \frac{\tau}{2} W \right)X , \ee
which satisfies $X_\WY(\tau)^\top X_\WY (\tau) = X^\top X$
and $X_\WY'(0) = - \nabla E(X)$.

2) Rewrite $W= U V^\top$ for $U = [E_X, \; X]$ and $ V = [X, \; -E_X]$. If $ I +
\frac{\tau}{2} V^\top  U$ is invertible,  then \eqref{eq:WY1} is
equivalent to
\be \label{eq:WY3}
X_\WY(\tau) =X - \tau U \left( I + \frac{\tau}{2} V^\top  U \right) ^{-1} V^\top X.
\ee
3) Suppose $p=1$ and $ W=a x^\top - x a^\top$, where $a = E_x$.
Then \eqref{eq:WY1} is given explicitly by
\be\label{eq:WY-1d}
x_\WY(\tau)= x - \beta_1(\tau) a - \beta_2(\tau)x,
\ee
where $\beta_1(\tau) = \tau \frac{  x^\top x   } { 1- \left(
\frac{\tau}{2}   \right)^2 (a^\top x)^2 + \left( \frac{\tau}{2}   \right)^2
\|a\|^2_2 \|x\|^2_2 }$ and $ \beta_2(\tau) =   -\tau \frac{  x^\top a +
\frac{\tau}{2} \left( (a^\top x)^2 - (a^\top a) (x^\top x) \right)  }
{ 1- \left( \frac{\tau}{2}   \right)^2 (a^\top x)^2 +
\left( \frac{\tau}{2} \right)^2 \|a\|^2_2 \|x\|^2_2 }$.
\end{lemma}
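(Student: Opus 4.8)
The plan is to handle the three parts in turn, each reducing to elementary linear algebra once the implicit relation \eqref{eq:WY} is rearranged. For part 1, I would first collect the unknown on one side: writing $Y := X_\WY(\tau)$, the relation $Y = X - \frac{\tau}{2}W(X+Y)$ becomes $\left(I + \frac{\tau}{2}W\right)Y = \left(I - \frac{\tau}{2}W\right)X$. Since $W$ is skew-symmetric its eigenvalues are purely imaginary, so $I + \frac{\tau}{2}W$ is invertible for every real $\tau$, and solving for $Y$ gives the closed form \eqref{eq:WY1}. To verify $X_\WY(\tau)^\top X_\WY(\tau) = X^\top X$, I would observe that the Cayley-type factor $Q(\tau) := \left(I + \frac{\tau}{2}W\right)^{-1}\left(I - \frac{\tau}{2}W\right)$ is built from polynomials in $W$, which commute with one another, so that $Q^\top Q$ telescopes to $I$ once $W^\top = -W$ is used; orthogonality of $X_\WY(\tau)$ then follows. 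For the derivative I would not differentiate the closed form but rather the implicit relation itself, evaluating at $\tau = 0$ where $Y(0) = X$, which yields $X_\WY'(0) = -WX = -\nabla E(X)$.

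For part 2, I would first confirm the factorization $W = UV^\top$ by a direct block multiplication of $U = [E_X,\; X]$ and $V = [X,\; -E_X]$. The algebraic trick $\left(I - \frac{\tau}{2}W\right) = 2I - \left(I + \frac{\tau}{2}W\right)$ recasts \eqref{eq:WY1} as $X_\WY(\tau) = 2\left(I + \frac{\tau}{2}W\right)^{-1}X - X$, so only the inverse of $I + \frac{\tau}{2}UV^\top$ is required. Applying the Sherman--Morrison--Woodbury identity, which is valid precisely when $I + \frac{\tau}{2}V^\top U$ is invertible, collapses this $n \times n$ inverse to a $2p \times 2p$ one, and after the factor-of-two cancellation I arrive at \eqref{eq:WY3}.

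Part 3 is then a specialization of part 2 to $p = 1$. Here $U = [a,\; x]$ and $V = [x,\; -a]$ are $n \times 2$, and $V^\top U$ is an explicit $2 \times 2$ matrix whose entries are the scalars $x^\top a$, $x^\top x$ and $a^\top a$. I would invert $I + \frac{\tau}{2}V^\top U$ by the $2 \times 2$ adjugate formula; its determinant is exactly the common denominator $1 - \left(\frac{\tau}{2}\right)^2(a^\top x)^2 + \left(\frac{\tau}{2}\right)^2\|a\|_2^2\|x\|_2^2$ appearing in both $\beta_1$ and $\beta_2$. Forming $-\tau U\left(I + \frac{\tau}{2}V^\top U\right)^{-1}V^\top x$ and reading off the coefficients of $a$ and $x$ then produces \eqref{eq:WY-1d}.

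I expect the only genuine care to be needed in two places. The first is the orthogonality argument in part 1: I must make the mutual commutativity of the Cayley factors explicit so that $Q^\top Q = I$ holds exactly rather than merely formally. The second is the sign bookkeeping in part 3, where the numerator of $\beta_2$ must be rearranged as $-\left(x^\top a + \frac{\tau}{2}\left((a^\top x)^2 - (a^\top a)(x^\top x)\right)\right)$ to match the stated formula. Everything else is routine computation.
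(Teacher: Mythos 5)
Your proposal is correct, and it follows essentially the standard route: the paper itself supplies no proof (it imports the result wholesale as Lemmas 3 and 4 of \cite{zai-yin}), and the steps you outline --- rearranging the implicit relation into the Cayley form, using skew-symmetry for invertibility and commutativity of the Cayley factors for orthogonality, differentiating the implicit relation at $\tau=0$, applying the Sherman--Morrison--Woodbury identity after the rewrite $I-\frac{\tau}{2}W = 2I-\left(I+\frac{\tau}{2}W\right)$, and inverting the $2\times 2$ block by the adjugate --- are exactly those of the cited source. The determinant you identify does match the common denominator of $\beta_1$ and $\beta_2$, and your sign bookkeeping for $\beta_2$ checks out.
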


Using the convention that an $m\times p$ matrix times a $p\times n$
matrix costs $2mnp$ flops, we calculate the computational complexity
of the scheme \eqref{eq:WY} as below. The cost of the scheme \eqref{eq:WY}
reduces to two inner products in case of $p=1$.
When $1<p \ll n$, the formula \eqref{eq:WY3} should be used to
compute $X_\WY(\tau)$ whose cost is $7np^2+\frac{40}{3}p^3+O(np)$.
This number shows that the complexity depends on
both the spatial degrees of freedom $n$
 and the number of columns $p$. Although orthogonality is preserved
 theoretically, we observe in our
 numerical experiments that it may lose due to numerical errors and a
 reorthogonalization step is needed. Further analysis on controlling the errors can
 be found in \cite{jiang-dai}.

 Our second strategy is to orthogonalize
$Y$ explicitly by using the QR factorization
\be\label{eq:QR}
X_\QR(\tau)=\mathrm{qr}(Y),
\ee
where $\mathrm{qr}(Y)$ is the column-orthogonal matrix $Q$ corresponding to the
QR factorization of $Y=QR$. Therefore, the update scheme \eqref{eq:QR}
can be viewed as a kind of projected gradient method on the Stiefel manifold.
 The next proposition shows that the matrix $Y$  in \eqref{eq:def-Y}  is always full rank  and the condition number of
matrix $Y^\top Y$ is bounded if the step size $\tau$ and the norm of  $\|H(X)\|_2$ are bounded.

\begin{Proposition}\label{prop:Y-fullrank}
  Suppose that $X\in\mathbb{R}^{n\times p}$ satisfies $X^\top X=I$. Then the
  matrix $Y$
   computed  by \eqref{eq:def-Y} is full rank for any $\tau \in \mathbb{R}$ and
   the eigenvalue of $Y^\top Y$ is bounded as
  \begin{equation}\label{eq:YYeigenRange}
    1 \leqslant \lambda(Y^\top Y) \leqslant \tau^2\|H(X)\|_2^2+1.
  \end{equation}
\end{Proposition}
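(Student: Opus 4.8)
The plan is to reduce $Y^\top Y$ to the form $I + \tau^2 M$ with $M$ symmetric positive semidefinite, after which both inequalities are immediate. The starting point is to rewrite the gradient in a form that exposes its orthogonality to the column space of $X$.

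First I would combine \eqref{eq:def-nablaE} with \eqref{eq:deriv-E-X} and the symmetry of the Hamiltonian $H := H(X)$ to obtain
\[
\nabla E(X) = E_X - X E_X^\top X = H X - X X^\top H X = (I - X X^\top) H X.
\]
Writing $P := I - X X^\top$ for the orthogonal projector onto the complement of the range of $X$, this reads $\nabla E(X) = P H X$. The single structural fact I will lean on is the identity $X^\top P = 0$ (equivalently $P X = 0$), which holds because $X^\top X = I$; in words, the gradient step $\nabla E(X)$ is orthogonal to $X$.

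Next I would substitute $Y = X - \tau P H X$, from \eqref{eq:def-Y}, into $Y^\top Y$ and expand. Using $X^\top X = I$ together with $X^\top P = 0$ and $P X = 0$, both cross terms $\tau\, X^\top (P H X)$ and $\tau\, (P H X)^\top X$ vanish, so that, after invoking $P^\top P = P^2 = P$,
\[
Y^\top Y = I + \tau^2 (P H X)^\top (P H X) = I + \tau^2\, X^\top H P H X.
\]
Setting $M := (P H X)^\top (P H X)$, the matrix $M$ is symmetric positive semidefinite by construction, and $Y^\top Y$ shares its eigenvectors with $M$, so every eigenvalue satisfies $\lambda(Y^\top Y) = 1 + \tau^2 \lambda(M) \geqslant 1$. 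In particular $Y^\top Y \succeq I$ is nonsingular, whence $Y$ has full column rank for every $\tau \in \mathbb{R}$, giving the lower bound in \eqref{eq:YYeigenRange}.

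For the upper bound I would estimate $\lambda_{\max}(M) = \|P H X\|_2^2$ by submultiplicativity of the spectral norm, $\|P H X\|_2 \leqslant \|P\|_2\, \|H\|_2\, \|X\|_2$. Since $P$ is an orthogonal projector $\|P\|_2 \leqslant 1$, and since $X$ has orthonormal columns $\|X\|_2 = 1$; hence $\lambda_{\max}(M) \leqslant \|H(X)\|_2^2$ and therefore $\lambda_{\max}(Y^\top Y) \leqslant \tau^2 \|H(X)\|_2^2 + 1$, which completes \eqref{eq:YYeigenRange}. There is no genuine obstacle in this argument; the only point requiring care is verifying that the two cross terms vanish, i.e. the orthogonality relation $X^\top \nabla E(X) = 0$, which is exactly the projector identity $X^\top P = 0$. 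Everything else is routine linear algebra.
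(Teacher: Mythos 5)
Your proof is correct and follows essentially the same route as the paper: expand $Y^\top Y$, show the cross terms vanish, and bound the remaining positive semidefinite term by $\tau^2\|H(X)\|_2^2$ using the projector $I - XX^\top$. The only cosmetic difference is that you establish $X^\top \nabla E(X) = 0$ via the projector form $\nabla E(X) = (I - XX^\top)H(X)X$ (which relies on the symmetry of $H(X)$ from the outset), whereas the paper cancels the symmetric sum $\nabla E(X)^\top X + X^\top\nabla E(X)$ directly from the definition \eqref{eq:def-nablaE} and only invokes $E_X = H(X)X$ and the projector in the final norm bound; both are valid here.
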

\begin{proof}
Since $Y=X-\tau \nabla E(X)$, we have
  \begin{align*}
    Y^\top Y 
             & = X^\top X - \tau(\nabla E(X)^\top X + X^\top\nabla E(X)) + \tau^2\nabla E(X)^\top\nabla E(X).
  \end{align*}
  The definition of $\nabla E(X)$ in \eqref{eq:def-nablaE} and $X^\top X=I$ yield
  \begin{align*}
    \nabla E(X)^\top X + X^\top\nabla E(X) &= (E_X^\top - X^\top E_XX^\top)X + X^\top(E_X - XE_X^\top X),\\
                                           &= E_X^\top X -X^\top E_X+X^\top E_X - E_X^\top X=0,
  \end{align*}
 which further gives
  \[
  Y^\top Y = I + \tau^2 \nabla E(X)^\top \nabla E(X).
  \]
Since $\nabla E(X)^\top \nabla E(X)$ is positive semidefinite and its largest
  eigenvalue is $\|\nabla E(X)\|^2_2$, we obtain
 \be\label{Y-fullrank-1}
  1 \leqslant \lambda(Y^\top Y) \leqslant \tau^2\|\nabla E(X)\|_2^2+1.\ee
Substituting the expression of $E_X$ in \eqref{eq:deriv-E-X} into $\nabla E(X)$
and using the fact $I-XX^\top$ is a projection, we have
\be \label{Y-fullrank-2}
\|\nabla E(X)\|_2 \leqslant \|(I-XX^{\top})H(X)X)\|_2 \le \|H(X)\|_2.
\ee
Combining \eqref{Y-fullrank-1} and \eqref{Y-fullrank-2} together proves \eqref{eq:YYeigenRange}.
\end{proof}

There are many approaches for computing the QR factorization.
 Propositon \ref{prop:Y-fullrank} implies that the matrix $Y^\top Y$ is
 well-conditioned under a suitable chosen step size
$\tau$. Hence, the QR factorization based on the Cholesky
 factorization can be computed stably and accurately, for example, using the
 efficient implementation in LAPACK. Specifically,  the matrix $Q$ can be
 assembled as in Algorithm  \ref{alg:ChQR}.


\begin{algorithm2e}[H]\caption{Cholesky QR factorization}
\label{alg:ChQR}
Input the matrix $Y$.\\
Compute the Cholesky factorization $L L^\top$ of $Y^\top Y$. \\
Output $Q=Y L^{-1}$.
\end{algorithm2e}


Using the same notation as our analysis for the complexity of scheme
\eqref{eq:WY}, the cost of scheme \eqref{eq:QR} is $6np^2+\frac1{3}p^3+O(np)$
since computing the gradient $\nabla E(X)$ and the
QR-factorization need $4np^2$ and $2np^2+\frac{1}{3}p^3+O(np)$,
respectively. We can see that the QR-based method is slightly cheaper than the first strategy.

Another critial algorithmic issue is the determination of a suitable step size
$\tau$. Instead of using the classical Armijo-Wolfe based monotone line search, we
apply  the nonmonotone  curvilinear (as our search path is on the manifold rather
than a straight line) search with an initial step size determined by the
Barzilai-Borwein (BB) formula, which we have found more efficient for our problem.
They were developed originally for the vector case in \cite{BarzilaiBorwein1988}.
At iteration $k$, the step size is computed as
 \be  \label{eq:bb-1} \tau_{k,1} = \frac{\tr\left((S_{k-1})^{\top}S_{k-1}\right)}
    {|\tr\big((S_{k-1})^{\top}  Y_{k-1} \big)|} \quad \mbox{ or  } \quad
    \tau_{k,2} = \frac{|\tr\left((S_{k-1})^{\top} Y_{k-1}
    \right)|}{ \tr \big((Y_{k-1})^{\top} Y_{k-1} \big)},\ee
where
$ S_{k-1} = X_k - X_{k-1}$ and $Y_{k-1} = \nabla E(X_k) -\nabla E(X_{k-1})$.
In order to guarantee convergence, the final value for $\tau_k$ is a fraction
(up to 1, inclusive) of $\tau_{k,1}$ or $\tau_{k,2}$ determined by a
nonmonotone search condition. Let $X(\tau)$ be either of \eqref{eq:WY} or
\eqref{eq:QR}, $C_0=E(X_0)$, $ Q_{k+1} = \eta Q_k +1$ and  $Q_0=1$. The new points
 are generated iteratively in the form  $X_{k+1}:=X_k(\tau_k)$, where
  $\tau_k = \tau_{k,1} \delta^h$ or $\tau_k = \tau_{k,2} \delta^h$   and
  $h$ is the smallest nonnegative integer satisfying
 \be \label{eq:NMLS-Armijo}
E(X_k(\tau_k)) \le C_k - \rho_1  \tau_k \|\nabla E(X_k)\|_F^2, \ee
 where each reference
 value $C_{k+1}$   is taken to be the convex combination of  $C_k$ and
 $E(X_{k+1})$ as $C_{k+1} = (\eta Q_k C_k +
            E(X_{k+1}))/Q_{k+1}$. 
In Algorithm \ref{alg:ConOptM} below, we specify our method for solving the DFT
models. Although several backtracking steps may be needed to update the
$X_{k+1}$, we observe that the BB step size  $\tau_{k,1}$ or $\tau_{k,2}$ is
often sufficient for \eqref{eq:NMLS-Armijo} to hold in most of our numerical
experiments. In the case that  $\tau_{k,1}$ or $\tau_{k,2}$ is not bounded, they
are reset to a finite number and convergence of our algorithm still holds.

\begin{algorithm2e}[H]\caption{Constraint Optimization on Stiefel Manifold}
\label{alg:ConOptM}
Given $X_0$, set $\tau>0$,  $ \rho_1, \delta, \eta, \epsilon \in (0,1)$, $k=0$.\\
\While{$\|\nabla E(X_k)\|>\epsilon$ }{
Compute $\tau_k \gets \tau_{k,1} \delta^h$ or $\tau_k \gets \tau_{k,2} \delta^h$, where
$h$ is the smallest nonnegative integer satisfying the condition \eqref{eq:NMLS-Armijo}.\\
Set $X_{k+1}\gets X_\QR(\tau)$ \mbox{ or } $X_{k+1}\gets X_\WY(\tau)$. \\
{$Q_{k+1}\gets\eta Q_k +1$ and $C_{k+1}\gets(\eta Q_k C_k
+ E(X_{k+1}))/Q_{k+1}$}.\\
  $k\gets k+1$.
}
Calculate the ground state energy and other physical quantities.
\end{algorithm2e}

We next summarize the computational
complexity of Algorithm \ref{alg:ConOptM} with respect
to schemes \eqref{eq:WY} and \eqref{eq:QR}, respectively.
Each iteration of \eqref{eq:WY} has a minimal complexity of $9np^2+\frac{40}{3}p^3+O(np)$
since the cost of computing
$X_{WY}(\tau)$ is $7np^2+\frac{40}{3}p^3+O(np)$ and the
assembling of the gradient $\nabla E(X)$ for the BB step size needs another $2np^2$.
The work for a different $\tau$ is $4np^2+O(p^3)$ because of
the saving of some intermediate variables.
On the other hand, the minimal cost of each iteration of the QR-based method
\eqref{eq:QR} is $6np^2+\frac1{3}p^3+O(np)$. The cost for a
different $\tau$ during backtracking line search is that of a new QR-factorization because $\nabla E(X)$ is available.
 When $p$ is larger than a
few hundreds, the inversion of $I + \frac{\tau}{2} V^\top  U$ in \eqref{eq:WY3}
is not negligible and the LU decomposition is usually more expensive than
the Cholesky factorization.


We make the following assumption for the convergence of our gradient type methods.
\begin{Assumption}\label{assump:lipschitz}
The total energy function $E(X)$ is differentiable and its derivative $ E_X(X)$ is Lipschitz continuous
with Lipschitz constant $L_0$, i.e.,
\[
\| E_X(X) - E_X(Y)\|_F \leq L_0\|X-Y\|_F, \  \text{for all}
\  X,\ Y \in \mathcal{M}^p_n.
\]
\end{Assumption}

Although Assumption
\ref{assump:lipschitz} may not be satisfied in many cases due to the
exchange-correlation term, it holds in cases such as the
Gross-Pitaevskii equation \cite{zhou04}.
 Using the proofs of \cite{jiang-dai} in a similar fashion, we can establish the
  convergence of Algorithm \ref{alg:ConOptM} as follows.
\begin{Theorem}\label{theo:convergence}
Suppose that Assumption
\ref{assump:lipschitz} holds.  Let $\{X_k \mid k\geq 0\}$ be a sequence generated by Algorithm \ref{alg:ConOptM}
 using $\epsilon = 0$, $\tau_k = \tau_{k,1} \delta^h$ and $\rho_1 < \frac1{2}$. Then the step size satisfies
\[
\tau_k\geq \min\{c,\tau_{k,1}\},
\]
where $c$ is some constant.  Furthermore, either  $\|\nabla E(X_k)\| = 0$ for some finite
$k$ or
\[
\liminf_{k\rightarrow\infty}\|\nabla E(X_k)\|_F = 0.
\]
\end{Theorem}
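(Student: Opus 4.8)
The plan is to mirror the two-part structure of the statement: first establish the uniform lower bound $\tau_k\geq\min\{c,\tau_{k,1}\}$ on the accepted step, and then feed this into a summability argument for the nonmonotone line search. Throughout I would use that $\Mnp$ is compact, so that $E$, $E_X=H(X)X$ and $\nabla E$ are bounded on $\Mnp$ and, together with Assumption~\ref{assump:lipschitz}, every constant introduced below can be taken independent of $k$; in particular $E$ is bounded below and $\nabla E(X)=E_X-XE_X^\top X$ is Lipschitz on $\Mnp$ with some constant $L_1$. The first step is to record the slope of the search curve at $\tau=0$. With $W=E_XX^\top-XE_X^\top$ and $\nabla E(X)=WX$ as in \eqref{eq:def-nablaE}, Lemma~\ref{lemma:skew-symm} gives $X_k'(0)=-\nabla E(X_k)$ for \eqref{eq:WY}, and a short trace computation yields
\[
\frac{\mathrm{d}}{\mathrm{d}\tau}E(X_k(\tau))\Big|_{\tau=0}=\tr\!\left(E_X^\top X_k'(0)\right)=-\tfrac12\|W\|_F^2\leq-\tfrac12\|\nabla E(X_k)\|_F^2,
\]
where the inequality uses $\|\nabla E(X_k)\|_F=\|WX_k\|_F\leq\|W\|_F$ since $X_k$ has orthonormal columns; the same first-order identity $X_k'(0)=-\nabla E(X_k)$ holds for the QR retraction \eqref{eq:QR}, so both schemes are handled at once.

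The decisive second step is to show that the quadratic remainder of $\tau\mapsto E(X_k(\tau))$ scales like $\|\nabla E(X_k)\|_F^2$ rather than like a fixed constant. I would prove $\|W\|_2\leq c_0\|\nabla E(X_k)\|_F$ by splitting a unit vector into its components in $\mathrm{range}(X_k)$ and its orthogonal complement and using $WX_k=\nabla E(X_k)$ together with $\nabla E(X_k)=(I-X_kX_k^\top)E_X+X_k(D-D^\top)$, $D:=X_k^\top E_X$. This makes $X_k'(\tau)$ and $X_k''(\tau)$ of order $\|\nabla E(X_k)\|_F$ and $\|\nabla E(X_k)\|_F^2$ respectively on any bounded $\tau$-interval, so that Assumption~\ref{assump:lipschitz} supplies a uniform $M>0$ with
\[
E(X_k(\tau))\leq E(X_k)-\tfrac{\tau}{2}\|\nabla E(X_k)\|_F^2+\tfrac{M}{2}\tau^2\|\nabla E(X_k)\|_F^2.
\]
Because $C_k\geq E(X_k)$, the test \eqref{eq:NMLS-Armijo} then holds whenever $\tfrac{M}{2}\tau\leq\tfrac12-\rho_1$, i.e. for every $\tau\leq(1-2\rho_1)/M$; this is exactly where the hypothesis $\rho_1<\tfrac12$ is used. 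As backtracking overshoots this threshold by at most a factor $\delta$, the accepted step satisfies $\tau_k\geq\min\{c,\tau_{k,1}\}$ with $c=\delta(1-2\rho_1)/M$, which is the first assertion.

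For the convergence conclusion I would invoke the Zhang--Hager reference values. An easy induction shows that $C_{k+1}$ is a convex combination of $C_k$ and $E(X_{k+1})$, hence $E(X_k)\leq C_k$ and $\{C_k\}$ is nonincreasing; being bounded below it converges. From $Q_{k+1}=\eta Q_k+1$ one gets $C_k-C_{k+1}=(C_k-E(X_{k+1}))/Q_{k+1}\geq\rho_1\tau_k\|\nabla E(X_k)\|_F^2/Q_{k+1}$, and since $Q_k\leq1/(1-\eta)$ is bounded, telescoping gives $\sum_k\tau_k\|\nabla E(X_k)\|_F^2<\infty$ and thus $\tau_k\|\nabla E(X_k)\|_F^2\to0$. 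I would finish by contradiction: if $\liminf_k\|\nabla E(X_k)\|_F>0$, then $\|\nabla E(X_k)\|_F\geq\epsilon_0>0$ eventually, while the Lipschitz bound on $\nabla E$ forces the BB step \eqref{eq:bb-1} below, $\tau_{k,1}=\|S_{k-1}\|_F^2/|\tr(S_{k-1}^\top Y_{k-1})|\geq1/L_1$; together with $\tau_k\geq\min\{c,\tau_{k,1}\}$ this keeps $\tau_k\|\nabla E(X_k)\|_F^2$ bounded away from $0$, the desired contradiction.

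The step I expect to be the real obstacle is the quadratic scaling: producing a genuinely uniform $c$ instead of one proportional to $\|\nabla E(X_k)\|_F^2$. A crude Taylor estimate with a constant quadratic term would only yield an accepted step of order $\|\nabla E(X_k)\|_F^2$, which is too weak to drive the summability step; establishing $\|W\|_2\leq c_0\|\nabla E(X_k)\|_F$ (equivalently, the smoothness of the retraction measured against the tangent vector $-\nabla E(X_k)$) is what rescales the remainder and lets $\rho_1<\tfrac12$ close the gap. Verifying that all the implied constants are genuinely uniform over the compact $\Mnp$ and over both the \eqref{eq:WY} and \eqref{eq:QR} retractions is the remaining bookkeeping.
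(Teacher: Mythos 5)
The paper gives no proof of Theorem~\ref{theo:convergence}: it defers entirely to the arguments of \cite{jiang-dai}, whose framework (first-order decrease of the retraction curve, a quadratic remainder scaled by $\|\nabla E(X_k)\|_F^2$ yielding a uniform lower bound on the accepted step, the Zhang--Hager reference-value telescoping, and the bound $\tau_{k,1}\geq 1/L_1$ from the Lipschitz continuity of $\nabla E$) is exactly what you reconstruct. Your argument is correct and takes essentially the same route as the cited proof, including the one genuinely delicate point --- establishing $\|W\|_2\leq c_0\|\nabla E(X_k)\|_F$ so that the Taylor remainder scales with $\|\nabla E(X_k)\|_F^2$ rather than with a fixed constant --- which you correctly identify and resolve.
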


We should point out that both schemes
\eqref{eq:WY} and \eqref{eq:QR} are special cases of
  optimization on manifold in \cite{opt-manifold-book}.
 A map $\cR: \T_X\mathcal{M}\to \mathcal{M}$ is called a retraction if
     \begin{enumerate}
      \item $\cR(0_X) = X$, where $0_X$ is the origin of $\T_X$.
      \item $\ddt \cR(tU)|_{t=0} = U $, for all $U \in \T_X$.
    \end{enumerate}
    It can be verified that both schemes \eqref{eq:WY} and \eqref{eq:QR} are
    retractions. They map a tangent vector of $\Mnp$ at $X$ to a member on $\Mnp$.
 Global convergence of the algorithms using monotone line search schemes can be obtained under some mild conditions
 \cite{opt-manifold-book}.   There are many other types of retractions. They can
 be applied to solve OFDFT and KSDFT as long as their computational cost is not
 expensive.

The discretization  on a  fine
mesh for large-scale systems usually leads to a  problem of huge size whose
computational cost is expensive. A useful technique is adaptive mesh refinement, where the discretized problems
are solved in turn from the coarsest mesh to the finest mesh and the starting point at each level other than
the coarsest is obtained by projecting the solution obtained on the previous
(i.e., next coarser) mesh.  We present our adaptive mesh refinement method in
Algorithm \ref{alg:AdpConOptM}.

\begin{algorithm2e}[H]\caption{Adaptive Mesh Refinement Method on
Stiefel Manifold}\label{alg:AdpConOptM}

Given an initial mesh $\mathcal{T}_0$ and initial wavefunctions $\hat{X}^0$. Set $i=1$.

Use $\hat{X}^{i-1}$ as an initial guess
on the $i$-th  mesh $\mathcal{T}_i$ to calculate the ground state
wavefunctions $\hat{X}^i$ using Algorithm \ref{alg:ConOptM}.  \label{step:AdpConOptM2}

Calculate a posteriori error estimator $\eta_i$ on mesh $\mathcal{T}_i$.
If $\eta_i<\epsilon$, evaluate physical quantities and stop. Otherwise,
mark and refine the mesh to obtain $\mathcal{T}_{i+1}$, and go to step
\ref{step:AdpConOptM2}.

\end{algorithm2e}


\section{Numerical experiments} \label{sec:num}
We now demonstrate the  efficiency and robustness of our
gradient type methods for solving both KSDFT and OFDFT models. All experiments
are performed on a PC cluster LSSC-III in the State Key Laboratory
of Scientific and Engineering Computing, Chinese Academy of Sciences.
Each node of LSSC-III  contains two Intel X5550 GPUs and 24GB memory.
Our implementation is parallelized by using MPI. Throughout our numerical
experiments, we use Troullier-Martins norm conserving pseudopotentials
\cite{troullier-martins1991} and choose
local density approximation (LDA) to approximate $V_{xc}(\rho)$ \cite{kohn-sham65}.

\subsection{Numerical results for the KSDFT model}\label{subsec: KSDFT}
In this subsection, we compare the gradient type method using \eqref{eq:WY}
(denoted by ``OptM-WY'') and the one using
\eqref{eq:QR} (denoted by ``OptM-QR'')  with the SCF iteration on KSDFT.
The source code of SCF is taken from the software Octopus (version 4.0.1) \cite{octopus},
an open source ab initio real-space
computing platform using finite difference discretization.
Both gradient type methods are implemented based on Octopus
and they use the same computational subroutines  as
SCF wherever it is possible.  The reported total energy
functional is computed according to \eqref{eq:EKS} rather than the original one
in Octopus based on precalculated eigenvalues.
However, these two formulas are equivalent mathematically. The initial guess is generated by linear combination of atomic orbits
(LCAO) method.  All three methods are terminated if
residuals of the gradient on manifold is smaller than some prescribed tolerance
$\varepsilon_g$, that is,
\[ \|E_X-XE_X^TX\|_{F} \le \varepsilon_g. \]
For SCF, the Broyden
method is used as the charge density mixing strategy. The linear eigenvalue
problem is solved by a preconditioned
conjugate gradient method (PCG) and it is terminated if the residual of the
eigenpairs is smaller than $0.1\varepsilon_g$ or the number of iterations
reaches 25.
 In fact, we have tested most eigensolvers available in Octopus, including
PCG, a new CG method developed in \cite{Jiang03}, a preconditioned Lanczos Scheme
\cite{Saad96}, and LOBPCG.   The reason of choosing PCG is  that it is one of
the best methods in our tests. Since the gradient type methods
may stagnate when the iterates are close to the solution, especially for large
scale systems, 
 we also terminate 
 if the relative change of
 the total energy functional is small, i.e.,
$ \frac{df_{k}+df_{k-1}+df_{k-2}}{3} < 10^{-13}$,
where $df_k=\displaystyle\frac{|E(X_{k-1})-E(X_k)|}{|E(X_{k-1})|+1}$
for the $k$-th iteration step.
An orthogonalization step is  executed in OptM-WY if $\|X^\top X-I\|_F > 10^{-12}$
to enforce orthogonality.

We choose eight typical molecular systems,
including benzene ($C_6H_6$), valine ($C_{5}H_{9}O_2N$), aspirin ($C_9H_8O_4$),
fullerene ($C_{60}$), alanine chain ($C_{33}H_{11}O_{11}N_{11}$),
carbon nano-tube ($C_{120}$), biological ligase 2JMO
($C_{178}H_{283}O_{50}N_{57}S$) \cite{beasley-hristova-shaw07}
and protein fasciculin2 ($C_{276}H_{442}O_{90}N_{88}S_{10}$)
\cite{bourne-taylor-marchot95}, without considering the spin
degrees of freedom.
In particular, the size of the matrix $X$ is  $1226485\times 793$ and
$1903841\times 1293$ in 2JMO and  fasciculin2, respectively. Our first experiment is performed
using the tolerance $\varepsilon_g = 10^{-6}$.  A summary of numerical results
is presented in Table \ref{table:LCAOh3}, where ``$E_0$'', ``$\Delta E_0$'' and
``$resi$''  denote the ground state energy, the relative total energy reduction
$E_0 - E_{min}$ where $E_{min}$ is a reliable minimum of the total energy and
the residual $\|E_X-XE_X^TX\|_{F}$ at the computed solution, respectively,
``$Iter$'' denotes the total number of iterations of each run,  ``cpu''
denotes the CPU time measured in seconds, and ``cores'' denotes the number of
CPU cores used in that computation. Both $E_0$ and $\Delta E_0$ are measured in
atomic unit (a.u.). We should point out that the number of
cores used in Table
\ref{table:LCAOh3} and Table \ref{table:epsilong} are chosen as $2^s$, where
$s$ is the largest integer so that  the number of elements on each processor
will not be smaller
than the value recommended  by Octopus. 
\begin{table}[h]
  \begin{center}
    \tabcolsep=8pt
    \begin{tabular}{|c|ccccc|}
      \hline
       {solver}&{ $E_0$(a.u.)}&{$\Delta E_0$(a.u.)}&{Iter}&{resi}&{cpu(s)}\\
      \hline
      \multicolumn{6}{|c|}{benzene$\quad p=15\quad n=64789$\quad{\bf $cores=8$}}\\
      \hline
      SCF     & -3.78474441e+01 & 1.02e-12 & 12  & 8.30e-07 & 7 \\
      OptM-WY & -3.78494441e+01 & 3.98e-13 & 120 & 5.83e-07 & 4 \\
      OptM-QR & -3.78494441e+01 & 5.11e-13 & 97  & 9.34e-07 & 4 \\
      \hline
      \multicolumn{6}{|c|}{valine$\quad p=23\quad n=109845$\quad{\bf $cores=8$}}\\
      \hline
      SCF     & -7.57851557e+01 & 1.32e-12 &  16 & 7.29e-07 & 17\\
      OptM-WY & -7.57851557e+01 & 1.28e-11 & 163 & 8.49e-07 & 10 \\
      OptM-QR & -7.57851557e+01 & 1.29e-12 & 211 & 1.85e-07 & 13 \\
      \hline
      \multicolumn{6}{|c|}{aspirin\quad $p$=34\quad $n$=133445\quad{\bf $cores=16$}}\\
      \hline
      SCF     & -1.20229138e+02 & 3.31e-12 & 17  & 8.61e-07 & 22 \\
      OptM-WY & -1.20229138e+02 & 4.83e-12 & 141 & 4.73e-07 & 11 \\
      OptM-QR & -1.20229138e+02 & 1.56e-12 & 152 & 7.18e-07 & 12 \\
      \hline
      \multicolumn{6}{|c|}{$C_{60}\quad p=120\quad n=191805$\quad{\bf $cores=16$}}\\
      \hline
      SCF     &	-3.42875137e+02	& 4.04e-12 & 23  &	6.51e-07 & 226 \\
      OptM-WY &	-3.42875137e+02	& 6.54e-12 & 239 &	5.68e-07 & 101 \\
      OptM-QR & -3.42875137e+02	& 3.25e-11 & 242 &	9.53e-07 &  96 \\
      \hline
      \multicolumn{6}{|c|}{alanine chain$\quad p=132\quad n=293725$\quad
      {\bf $cores=32$}}\\
      \hline
      SCF     & -4.78063923e+02 & 4.98e-01 & 200  & 4.18e-01 &  2769\\
      OptM-WY & -4.78562217e+02 & 5.82e-10 & 2082 & 9.64e-07 &  1102 \\
      OptM-QR & -4.78562217e+02 & 1.73e-10 & 1413 & 5.57e-07 &   712 \\
      \hline
      \multicolumn{6}{|c|}{$C_{120}\quad p=240\quad n=354093$\quad{\bf $cores=32$}}\\
      \hline
      SCF     & -6.84246913e+02 & 2.20e-01 &  200 & 2.89e-01 & 8159 \\
      OptM-WY & -6.84467036e+02 & 1.88e-09 & 1964 & 9.73e-07 & 2339 \\
      OptM-QR & -6.84467036e+02 & 2.06e-09 & 2062 & 9.95e-07 & 2213 \\
      \hline
      \multicolumn{6}{|c|}{2JMO$\quad p=793\quad n=1226485$\quad{\bf
      $cores=128$}}\\
      \hline
      SCF     &  7.42565784e+04 & 7.68e+04 &  200 & 3.18e+02 & 68988\\
      OptM-WY & -2.56413550e+03 & 9.98e-05 & 1521 & 4.36e-05 & 15757 \\
      OptM-QR & -2.56413550e+03 & 9.96e-05 & 1878 & 3.94e-05 & 15727 \\
      \hline
      \multicolumn{6}{|c|}{fasciculin2 $\quad p=1293\quad n=1903841$\quad{\bf
      $cores=256$}}\\
      \hline
      SCF     &  1.63686511e+05 & 1.68e+05 &  200 & 5.39e+02 & 148710\\
      OptM-WY & -4.26018878e+03 & 3.50e-05 & 2337 & 5.21e-05 &  49532 \\
      OptM-QR & -4.26018877e+03 & 4.44e-05 & 2414 & 5.93e-05 &  39102\\
      \hline
    \end{tabular}
  \end{center}
  \caption{A comparison of numerical results among different solvers on
  achieving
  $\varepsilon_g\,=\,10^{-6}$.}\label{table:LCAOh3}
\end{table}

We can observe from Table \ref{table:LCAOh3} that OptM-WY and OptM-QR are faster
than SCF on all instances, and all three methods are able to
compute solutions with residuals smaller than the given tolerance on benzene, valine, aspirin and
$C_{60}$. The SCF method fails to converge on alanine chain, $C_{120}$, 2JMO and fasciculin2  in terms of both the energy reduction and gradient
residuals. Both OptM-WY and OptM-QR are able to converge on alanine chain and $C_{120}$,
but achieve a residual in the order of $10^{-5}$ on 2JMO and fasciculin2. We
further illustrate the residuals $\|\nabla(E(X_k))\|_F$ and the energy reduction $E(X_k)-E_{min}$
of $C_{120}$ in Figure \ref{figure:gdiffC120}. Although these values oscillate
sharply without a descending trend in SCF,  they are reduced steadily in
OptM-WY and OptM-QR. 

\if false
\begin{figure}[h]
\begin{center}
\subfigure[]{\label{fig:SCFgdiff120}
\includegraphics[width=6cm]{fig/C120SCFgdiffG6.pdf}}
\subfigure[]{\label{fig:SCFediff120}
\includegraphics[width=6cm]{fig/C120SCFediffG6.pdf}}\\
\subfigure[]{\label{fig:OPTgdiff120}
\includegraphics[width=6cm]{fig/C120OPTgdiffG6.pdf}}
\subfigure[]{\label{fig:OPTediff120}
\includegraphics[width=6cm]{fig/C120OPTediffG6.pdf}}\\
\subfigure[]{\label{fig:QRgiff120}
\includegraphics[width=6cm]{fig/C120QRgdiffG6.pdf}}
\subfigure[]{\label{fig:QRediff120}
\includegraphics[width=6cm]{fig/C120QRediffG6.pdf}}
\end{center}
\caption{Residuals $\|\nabla E(X_k)\|_F$ and
the energy reduction $E(X_k)-E_{min}$ of SCF, OptM-WY and OptM-QR on the example $C_{120}$.}\label{figure:gdiffC120}
\end{figure}
\fi

\begin{figure}[h]
\begin{center}
\includegraphics[width=0.8\textwidth,height=1\textwidth]{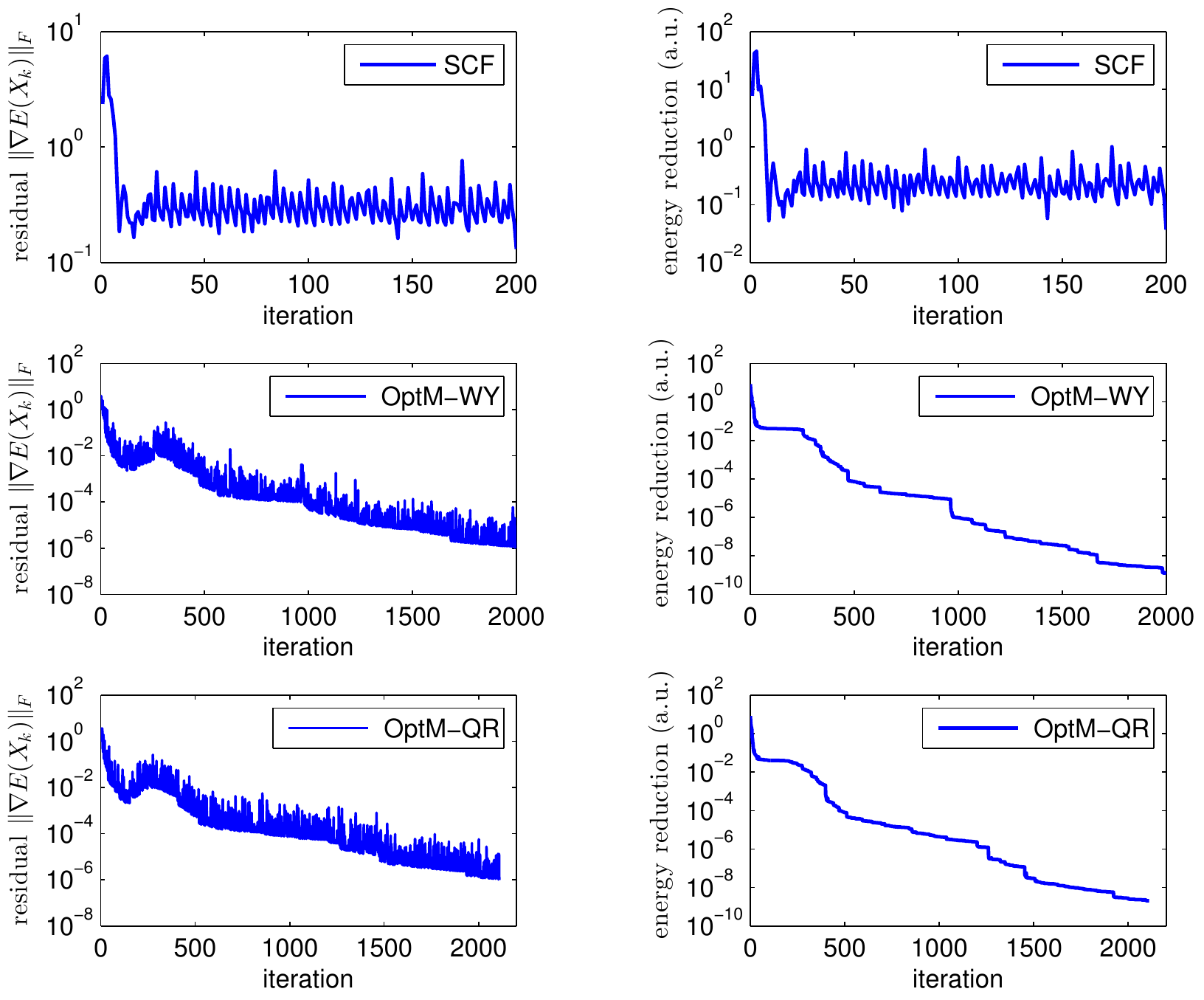}
\end{center}
\caption{Residuals $\|\nabla E(X_k)\|_F$ and
the energy reduction $E(X_k)-E_{min}$ of SCF, OptM-WY and OptM-QR on  $C_{120}$.}\label{figure:gdiffC120}
\end{figure}

Table \ref{table:LCAOh3} also shows that OptM-QR
is faster than OptM-WY on most test problems. We next illustrate their
convergence behavior using three different tolerances $\varepsilon_g=10^{-5}, 10^{-6}$ and $10^{-7}$ on alanine chain and
$C_{120}$ molecules. The results are reported in Table \ref{table:epsilong}.
 It follows from Tables \ref{table:LCAOh3} and \ref{table:epsilong} that the gradient type
methods can often attain a highly accurate solution and OptM-QR
behaves slightly better than OptM-WY, especially on large systems.
\begin{table}[ht]
\centering
    \begin{tabular}{|c|ccccc|}
      \hline
       {solver}&{\bf $\varepsilon_g$}&{\bf $\Delta E_0$}(a.u.)&{Iter}&{resi}&{cpu(s)}\\
      \hline
      \multicolumn{6}{|c|}{alanine chain$\quad p=132\quad n=293725$\quad
      {\bf $cores=32$}}\\
      \hline
      OptM-WY & 1e-05 & 1.44e-08 & 1652 & 7.80e-06 &  881 \\
      OptM-QR & 1e-05 & 3.50e-08 & 1199 & 9.30e-06 &  606 \\
      OptM-WY & 1e-06 & 5.82e-10 & 2082 & 9.64e-07 & 1102 \\
      OptM-QR & 1e-06 & 1.73e-10 & 1413 & 5.57e-07 &  712 \\
      OptM-WY & 1e-07 & 1.66e-10 & 2814 & 9.72e-08 & 1921 \\
      OptM-QR & 1e-07 & 1.65e-10 & 2016 & 9.17e-08 & 1474 \\
      \hline
      \multicolumn{6}{|c|}{$C_{120}\quad p=240\quad n=354093$\quad{\bf $cores=32$}}\\
      \hline
      OptM-WY & 1e-05 & 3.18e-08 & 1537 & 9.34e-06 & 1861 \\
      OpyM-QR & 1e-05 & 6.91e-08 & 1383 & 9.42e-06 & 1506 \\
      OptM-WY & 1e-06 & 1.88e-09 & 1964 & 9.73e-07 & 2339 \\
      OptM-QR & 1e-06 & 2.06e-09 & 2062 & 9.95e-07 & 2213 \\
      OptM-WY & 1e-07 & 1.23e-09 & 2776 & 9.90e-08 & 4433 \\
      OptM-QR & 1e-07 & 1.23e-09 & 3020 & 9.97e-08 & 4261 \\\hline
    \end{tabular}
    \caption{Numerical results with respect to different
    $\varepsilon_g$.}\label{table:epsilong}
\end{table}

We next examine parallel scalability of all three methods. For brevity, we only
show results for the systems: $C_{60}$, alanine chain, 2JMO and fasciculin2. Let
$k_0$ be the smallest number of cores so that the required memory for
the given problem can fit in these cores.
  The speedup factor for running a code on $k$ cores is defined as
  \be \label{eq:speedup-factor}
\mbox{speedup-factor}(k_0, k) = \frac{\mbox{wall clock time for a $k_0$-core run}}
{\mbox{wall clock time for a $k$-core run}}.
\ee
When the wall clock time is measured, we only run 10 iterations for SCF and 100 iterations for OptM-WY and OptM-QR
since the parallel speedup factor should not change if more iterations are preformed.
The wall clock time for each algorithm is split into two parts. The part, denoted by T0, involves
 the calculation of the total energy, gradients and Hamiltonian, which are
 determined by the specific implementation of Octopus. For example, the calculation of gradients uses the
subroutine ``Hamiltonian\_apply'' and the update of Hamiltonian uses the subroutines
``density\_calc'' and ``v\_ks\_calc''. The calculation of the total energy uses a
revision of the origin subroutine ``total\_energy'' by setting the parameter ``full''
 to be ``.ture.'' and then summing up all energy terms. All other wall
 clock time is counted as T1, which reflects the algorithmic difference among
 different algorithms.  

\begin{figure}[htb]
\begin{center}
\includegraphics[width=0.9\textwidth,height=0.7\textwidth ]{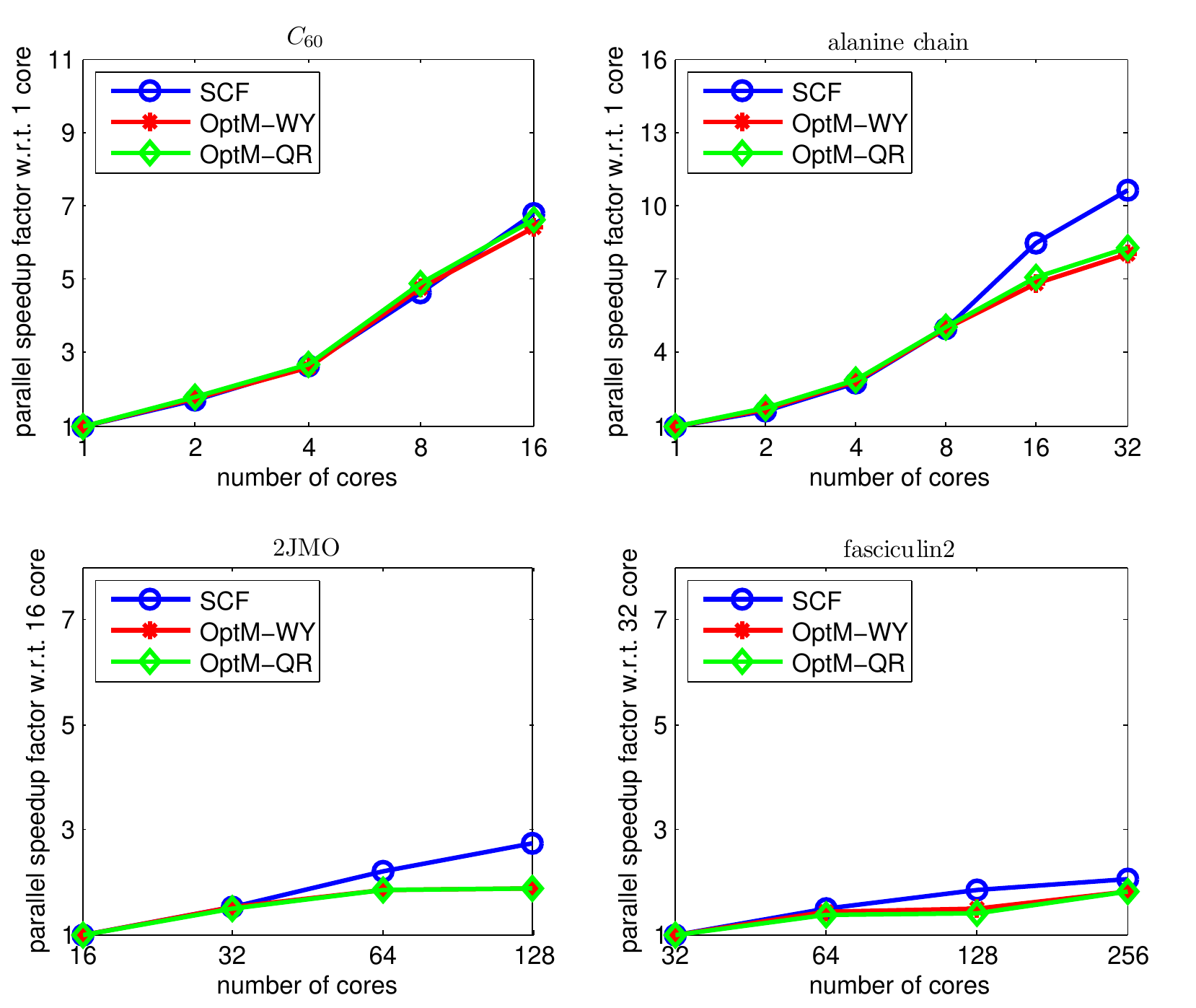}
\end{center}
\caption{The speedup factor with respect to T0}\label{figure:T0speedup}
\end{figure}

\begin{figure}[htb]
\begin{center}
\includegraphics[width=0.9\textwidth,height=0.7\textwidth ]{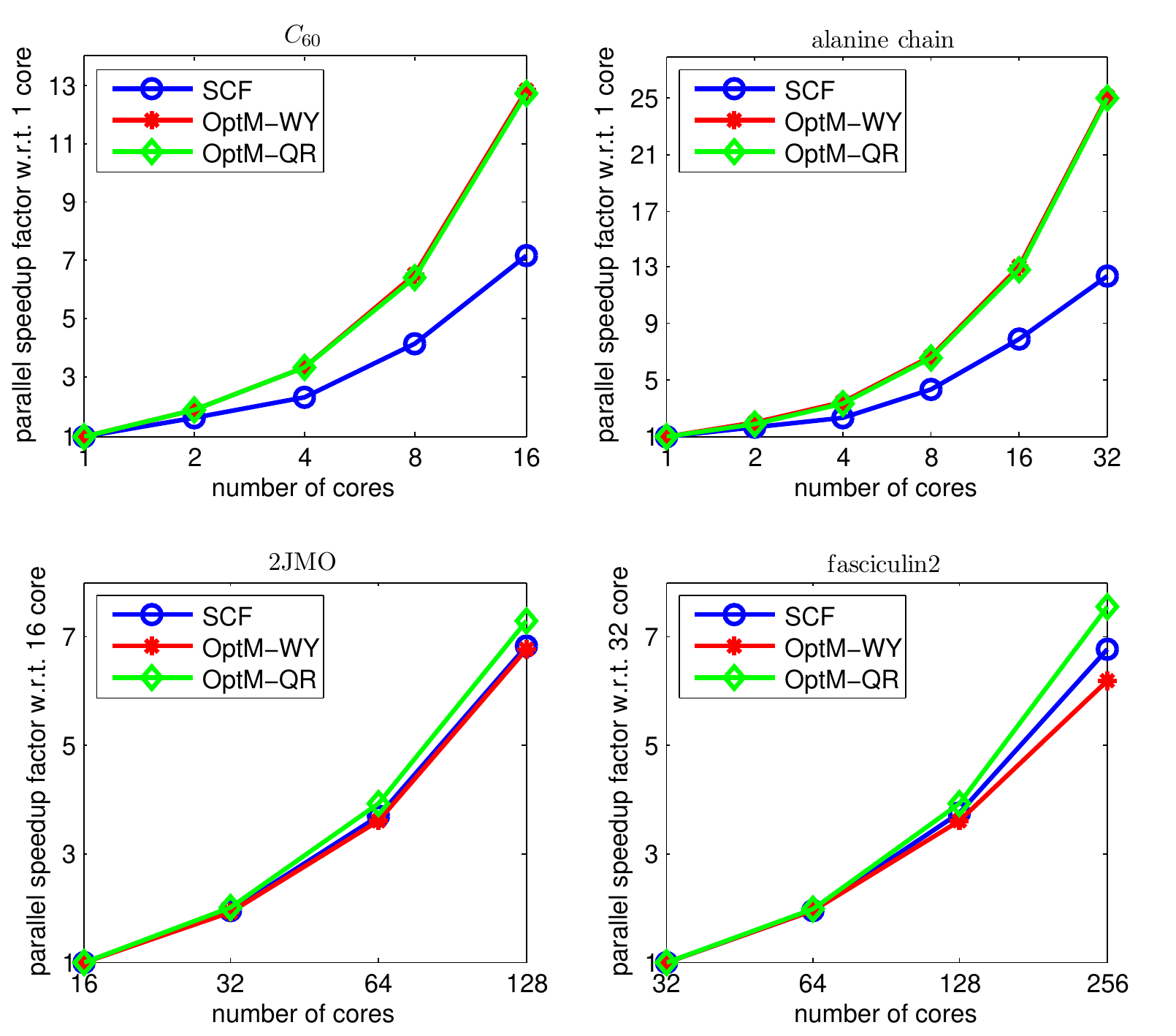}
\end{center}
\caption{The speedup factor with respect to T1}\label{figure:T1speedup}
\end{figure}

Figures \ref{figure:T0speedup} and \ref{figure:T1speedup} show the speedup
factors of T0 and T1, respectively. We can see that the scalability of T0 is not
good on 2JMO and fasciculin2. The performance of SCF is not always the same as
the gradient type methods because that their time and proportion of calculating the total energy,
gradients and Hamiltonian are different.
 On the other hand, OptM-QR is better than SCF in terms of T1. OptM-WY behaves
 similar to OptM-QR on $C_{60}$ and alanine chain, but it is worse on 2JMO and
 fasciculin2. The reason is that the complexity at each iteration also depends
 on the number of columns $p$. The Cholesky factorization of a $p\times p $
 matrix in OptM-QR costs $\frac{1}{3}p^3$ while the
calculation of $\left( I + \frac{\tau}{2} V^\top  U \right)
^{-1} V^\top X$ in OptM-WY needs $\frac{40}{3}p^3+\mathcal{O}(p^2)$. These two
operations are not parallelized in our current implementation.
Consequently, OptM-QR has a slightly higher parallel speedup factor than OptM-WY.
 We next present the ratio $T_0/(T_0+T_1)$ in  Figure \ref{figure:T0ratio}.  It shows that the most time
 consuming part of SCF is T1, in particular, for larger systems, due to the
 eigenvalue computation. On the other hand, T0 accounts for a large proportion
 of OptM-QR and OptM-WY. Hence, the scalability of our gradient type methods can be further
 improved if the efficiency of $T0$ can be enhanced.

\begin{figure}[htb]
\begin{center}
\includegraphics[width=0.9\textwidth,height=0.7\textwidth]{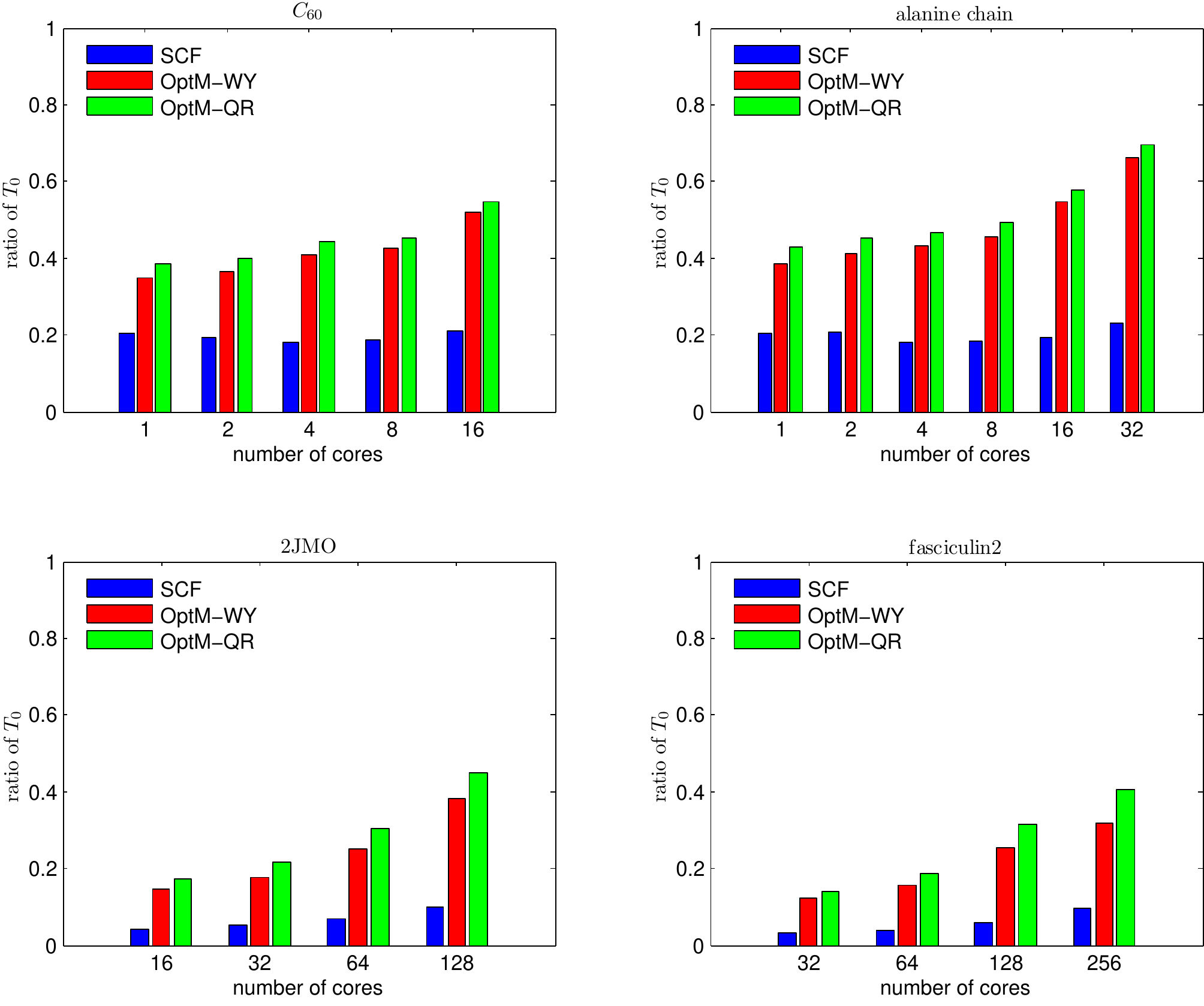}
\end{center}
\caption{The ratio of $T_0/(T_0+T_1)$.}\label{figure:T0ratio}
\end{figure}

Finally, we investigate the sensitivity of the gradient type methods with
respect to the number of cores on the systems $C_{60}$ and alanine chain. The parameters
of each run are the same. The results are presented in Table \ref{table:para}.
We can see that the total number of iterations of the gradient type methods are
not always the same. The reason may be that the BB step size is sensitive to
 numerical errors. Hence, a more robust method for choosing the step size is
 expected.

\begin{table}[h]
  \begin{center}
    \tabcolsep=4pt
    \begin{tabular}{|c|ccc|ccc|ccc|}
    \hline
           &  \multicolumn{3}{c|}{OptM-WY} &  \multicolumn{3}{c|}{OptM-QR} &
           \multicolumn{3}{c|}{SCF} \\
    \cline{2-10}
    cores  &  resi & Iter & cpu(s) &  resi & Iter & cpu(s) &  resi & Iter & cpu(s) \\
    \hline
    \multicolumn{10}{|c|}{$C_{60}\quad p=120\quad n=191805$\quad{\bf $\varepsilon_g\,=\,1e-6$}}\\
    \hline
     1 & 9.96e-07 &	226	& 904 &	7.89e-07 & 245 & 891 & 7.05e-07 & 23 & 1649\\
     2 & 6.79e-07 &	260	& 562 &	8.79e-07 & 244 & 484 & 6.67e-07 & 23 & 1123\\
     4 & 6.37e-07 &	224	& 294 &	9.97e-07 & 241 & 291 & 6.67e-07 & 23 &  672\\
     8 & 3.72e-07 &	237	& 165 &	9.51e-07 & 236 & 153 & 6.52e-07 & 23 &  376\\
    16 & 5.68e-07 &	239	& 100 &	9.53e-07 & 242 &  96 & 6.51e-07 & 23 &  225\\
    \hline
    \multicolumn{10}{|c|}{alanine chain$\quad p=132\quad n=293725$\quad{\bf $\varepsilon_g\,=\,1e-6$}}\\
    \hline
     1 & 9.93e-07 & 1605 &	11984 & 9.95e-07& 1591 & 10671 & --  & -- & --\\
     2 & 9.75e-07 &	2086 &	 8534 & 7.59e-07& 1949 &  7475 & --  & -- & --\\
     4 & 9.50e-07 & 1856 &	 4344 & 9.75e-07& 1584 &  3442 & --  & -- & --\\
     8 & 9.06e-07 &	1985 &	 2526 & 9.97e-07& 1502 &  1760 & --  & -- & --\\
    16 & 9.98e-07 & 1816 &	 1410 & 8.72e-07& 1828 &  1318 & --  & -- & --\\
    32 & 9.64e-07 &	2082 &	 1123 & 5.57e-07& 1413 &   727 & --  & -- & --\\
    \hline
    \end{tabular}
  \end{center}
  \caption{Numerical results with respect to the number of cores. The results of SCF are
  not reported for alanine chain because SCF failed as shown in Table \ref{table:LCAOh3}
.}\label{table:para}
\end{table}


\subsection{Numerical results for the OFDFT model}

Our numerical analysis for OFDFT is based on aluminum crystal,
where \eqref{eq:TFW} is used as KEDF and external
potential is the GNH (Goodwin-Needs-Heine) pseudopotential
 \cite{goodwin-needs-heine90}:
\begin{align}
V_{ext}(r)=\frac{2}{\pi}\int_0^{\infty}\frac{\sin(rt)}{rt}\Big(
(Z-AR)\cos(Rt)+A\frac{\sin(Rt)}{t}\Big)e^{-(\frac{t}{R_c})^6}dt,
\end{align}
where $Z$ is the number of valence electrons, 
 $R=1.15$,
$R_c=3.5$ and $A=0.1107$. Several finite systems with fixed atomic positions are simulated.

We compare the performance of SCF, OptM-WY
and OptM-QR. They are further embedded in the adaptive mesh refinement method
Algorithm \ref{alg:AdpConOptM}. Since OFDFT is not available in
Octopus, we implement all methods in the package RealSPACES, 
which is developed based on the
platform PHG (Parallel Hierarchical Grid) \cite{phg}.
The initial guess is generated by the pseudo-wave functions of
aluminum and the initial mesh is produced by RealSPACES. A lattice spacing of
7.559 a.u. is used for the size of the unit cell.
 The maximum number of iterations for SCF and
OptM-QR and OptM-WY is 25 and 200, respectively.  We terminated all methods if
$\|\nabla E(X_k)\|_F\le \varepsilon_g = 10^{-5}$.

A summary of numerical
results is reported in Table \ref{table:compare-OF}, where $size$, $N_{Al}$,
$E^0_p$, and $E_b$ stand for the number of unit cells, the total number of
aluminum atoms, the ground state energy per atom, and the binding energy.
The binding energy is evaluated by
$ E_b = \frac{E_0-N_{Al} E_s}{N_{Al}}$, where $E_0$ is the ground state
energy calculated by \eqref{eq:OFeng}, and $E_s=-52.800704 (eV)$ is the
energy for single aluminum atom. $E^0_p$, $E_b$ and $E_0$ are all measured in
eV. The number $n$ denotes the total number of degrees of freedom in the
final adaptive step.

\begin{table}
\centering
\begin{tabular}{|c|cccc|}\hline
    solver    & { $E^0_p$(eV)} &  $E_b$(eV)  & {$n$}     & {cpu(s)}\\
    \hline
    \multicolumn{5}{|c|}{$size=4\times 4\times 4$\quad$N_{Al}=365$
    \quad{$cores=32$}}\\\hline
      SCF      &  -57.036037    &  -4.235333  &  835908    & 1557  \\
    OptM-WY    &  -57.036037    &  -4.235333  &  835904    & 867   \\
    OptM-QR    &  -57.036038    &  -4.235334  &  835895    & 756   \\\hline
    \multicolumn{5}{|c|}{$size=7\times 7\times 7$\quad$N_{Al}=1688$
    \quad{$cores=64$}}\\\hline
      SCF      &  -57.150302    &  -4.349598  &  4486542   & 8368   \\
    OptM-WY    &  -57.150302    &  -4.349598  &  4485928   & 5245   \\
    OptM-QR    &  -57.150302    &  -4.349598  &  4485919   & 4732   \\\hline
    \multicolumn{5}{|c|}{$size=10\times 10\times 10\quad N_{Al}=4631$
    \quad{\bf $cores=128$}}\\\hline
      SCF      &  -57.628512    &  -4.827808  &  13411386  & 15588   \\
    OptM-WY    &  -57.628513    &  -4.827809  &  13411388  & 9412    \\
    OptM-QR    &  -57.628513    &  -4.827809  &  13411373  & 8852    \\\hline
    \multicolumn{5}{|c|}{$size=12\times 12\times 12$\quad$N_{Al}=7813$
    \quad{$cores=128$}}\\\hline
      SCF      &  -58.093705    &  -5.293001  &  45010875  & 58678    \\
    OptM-WY    &  -58.093706    &  -5.293003  &  45010864  & 30645   \\
    OptM-QR    &  -58.093707    &  -5.293003  &  45010826  & 26457   \\\hline
\end{tabular}
\caption{Numerical results computed by the adaptive mesh refinement method.}\label{table:compare-OF}
\end{table}

Table \ref{table:compare-OF} shows that the ground state energy per
atom converges as the size of the system is increased.  The gradient type
methods are more efficient than SCF, and
OptM-QR  is slightly  better than OptM-WY. We
should point out there exists difference on the wave
functions on the adaptive grids obtained from different gradient type methods.
Hence, the final total
number of degrees of freedom $n$ may be different even if their ``size'' are the same.
\if false
\comm{The results computed by other methods in the literature are presented
in Table \ref{table:energy-per}. We can see that our
numerical results are close to the ones in Table \ref{table:energy-per}.}


\begin{table}[!ht]
\centering
\begin{tabular}{|c|c|}\hline
Method     &   $E^0_p$(eV)    \\\hline
KSDFT/BLPS \cite{shin-ram-huang-hung-cater09} &   -57.9550     \\
KSDFT/NLPP \cite{carling-carter03}            &   -57.1940     \\
KSDFT/LPP \cite{carling-carter03}             &   -58.3321     \\
OFDFT/SM \cite{smargiassi-madden}             &   -58.4403     \\
OFDFT/WGC \cite{wang-govind-carter98}         &   -58.3340     \\
OFDFT/WT \cite{wang-teter92}                  &   -58.3303     \\\hline
\end{tabular}
\caption{The ground state energy for single aluminum obtained by different methods,
where BLPS, NLPP, LPP are the names of the pseudopotential, and SM, WGC, WT are
the KEDF
formulas.}\label{table:energy-per}
\end{table}
\fi

The change of the residuals versus the iteration history on a
particular mesh is presented in Figure \ref{fig:AlOptM4} for systems $Al_{1688}$
and $Al_{4631}$, respectively, where $Al_{1688}$ denotes the
aluminum $7\times 7\times 7$ cluster with 1688 aluminum atoms and similar
notations are used for $Al_{365}$, $Al_{4631}$ and $Al_{7813}$. The gradient type methods are
able to reduce the residuals steadily although they may be increased at some
iterations. The contours of the ground state charge density and their corresponding
adaptive mesh distributions are shown in Figure \ref{fig:DenDistribution} for an
intuitive illustration of our approaches.


\begin{figure}[h]
\begin{center}
\includegraphics[width=0.8\textwidth,height=0.6\textwidth]{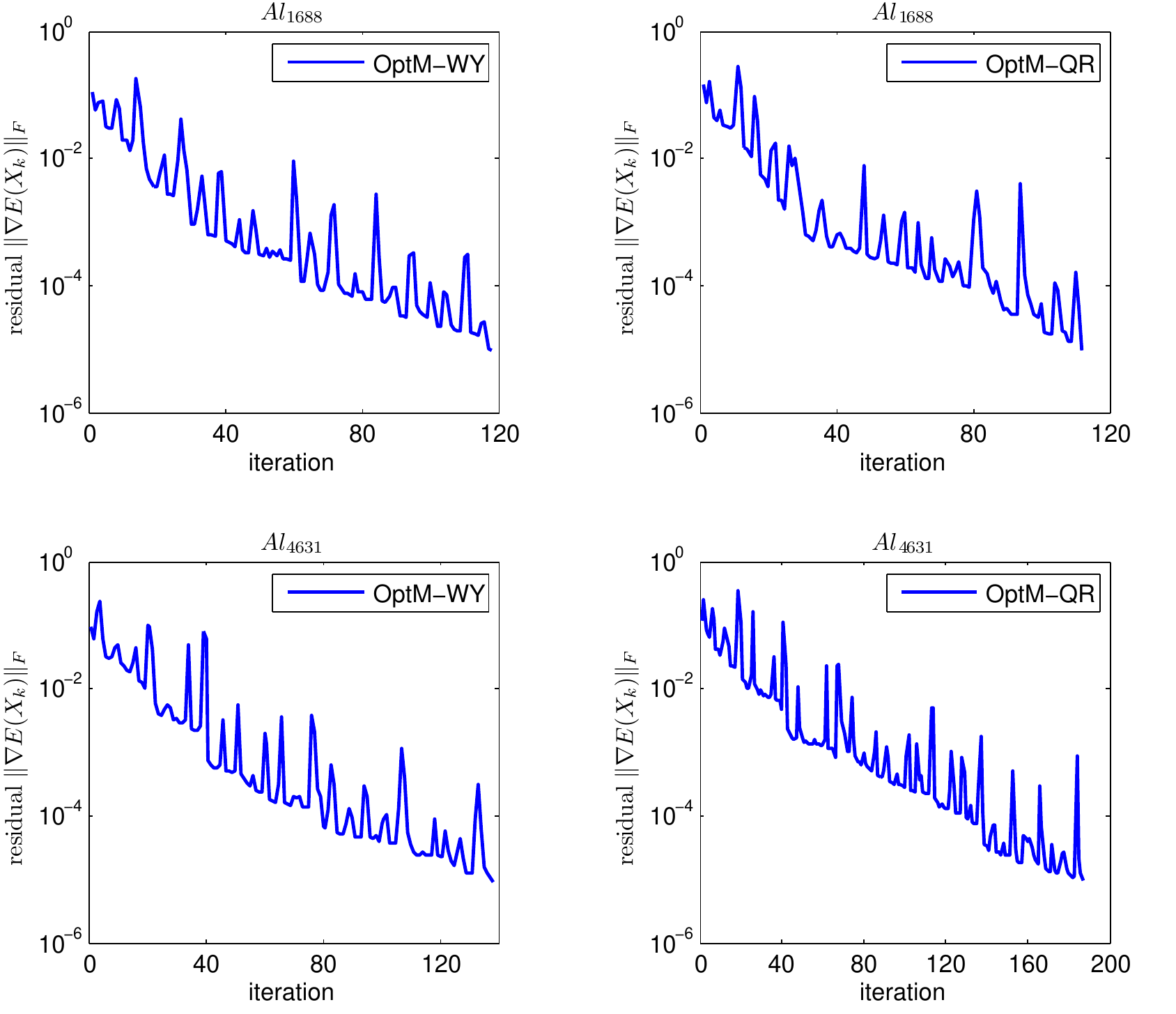}
\end{center}
\caption{Residuals $\|G-XG^TX\|_{F}$. Top: $Al_{1688}$; Bottom: $Al_{4631}$.}\label{fig:AlOptM4}
\end{figure}

\begin{figure}[!ht]
\centering
\subfigure[] {\label{fig:al1688-0}
\includegraphics[width=4.75cm]{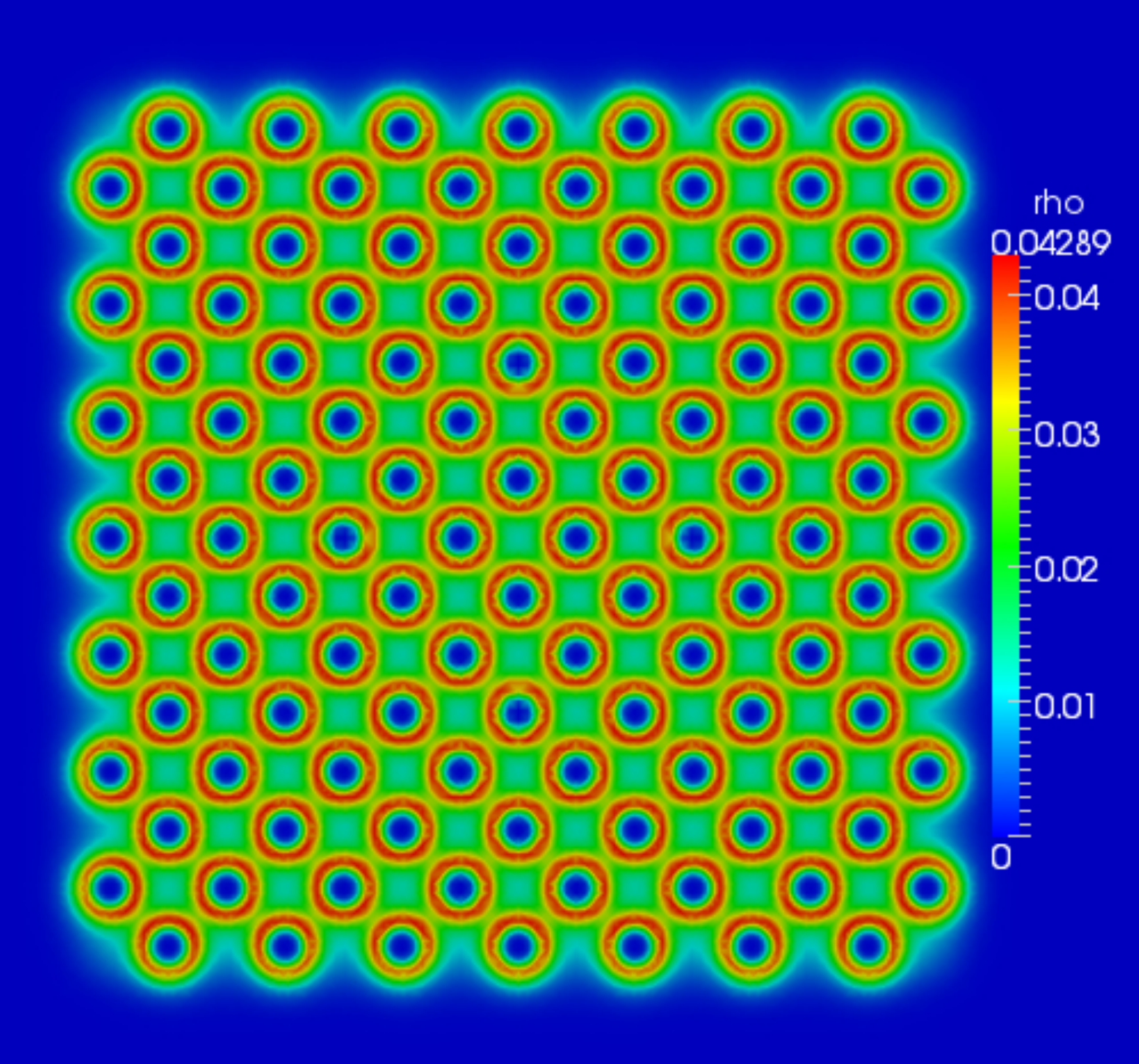}} \hskip 0.8cm
\subfigure[] {\label{fig:al1688-3.8}
\includegraphics[width=4.4cm]{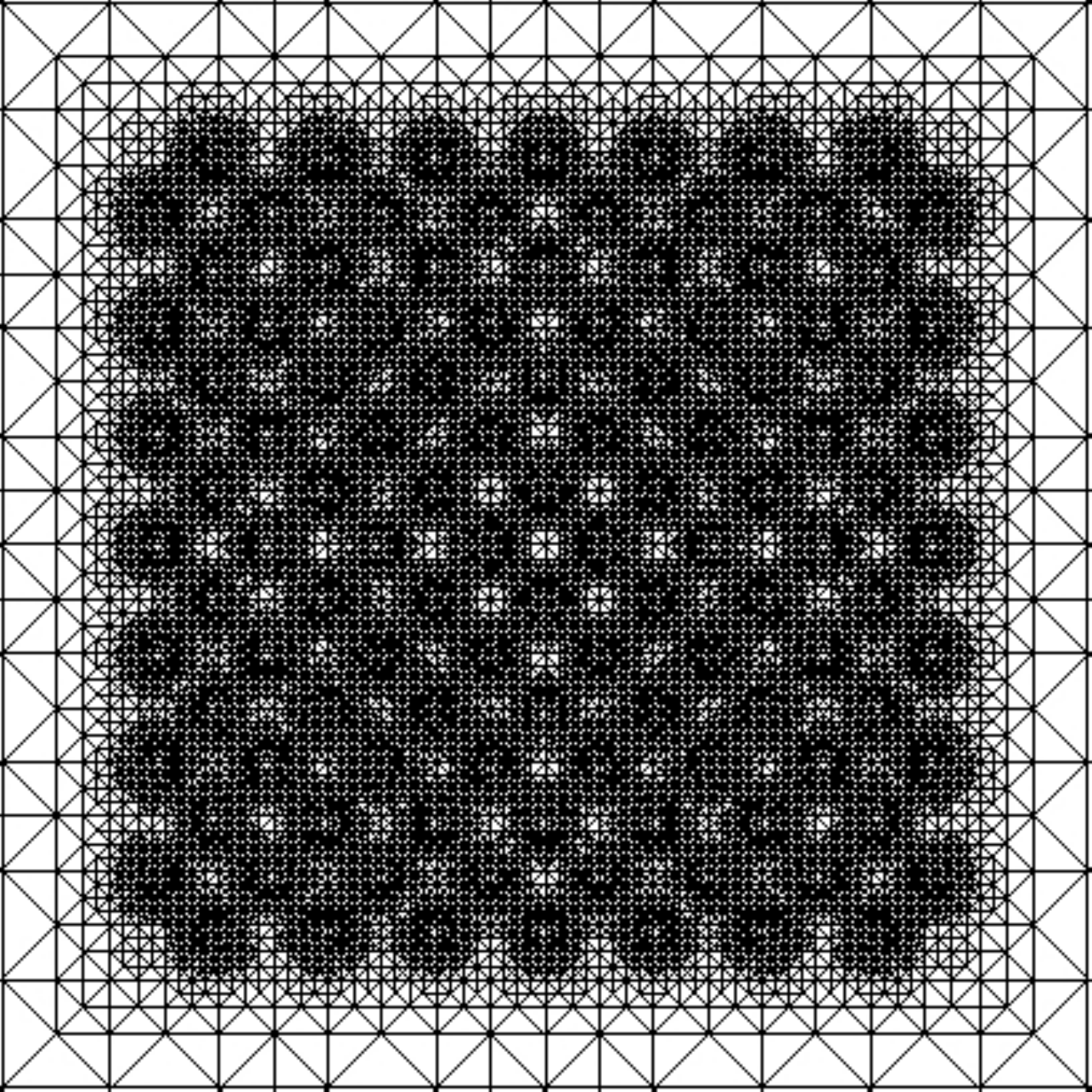}}
\subfigure[] {\label{fig:al4631-0}
\includegraphics[width=4.75cm]{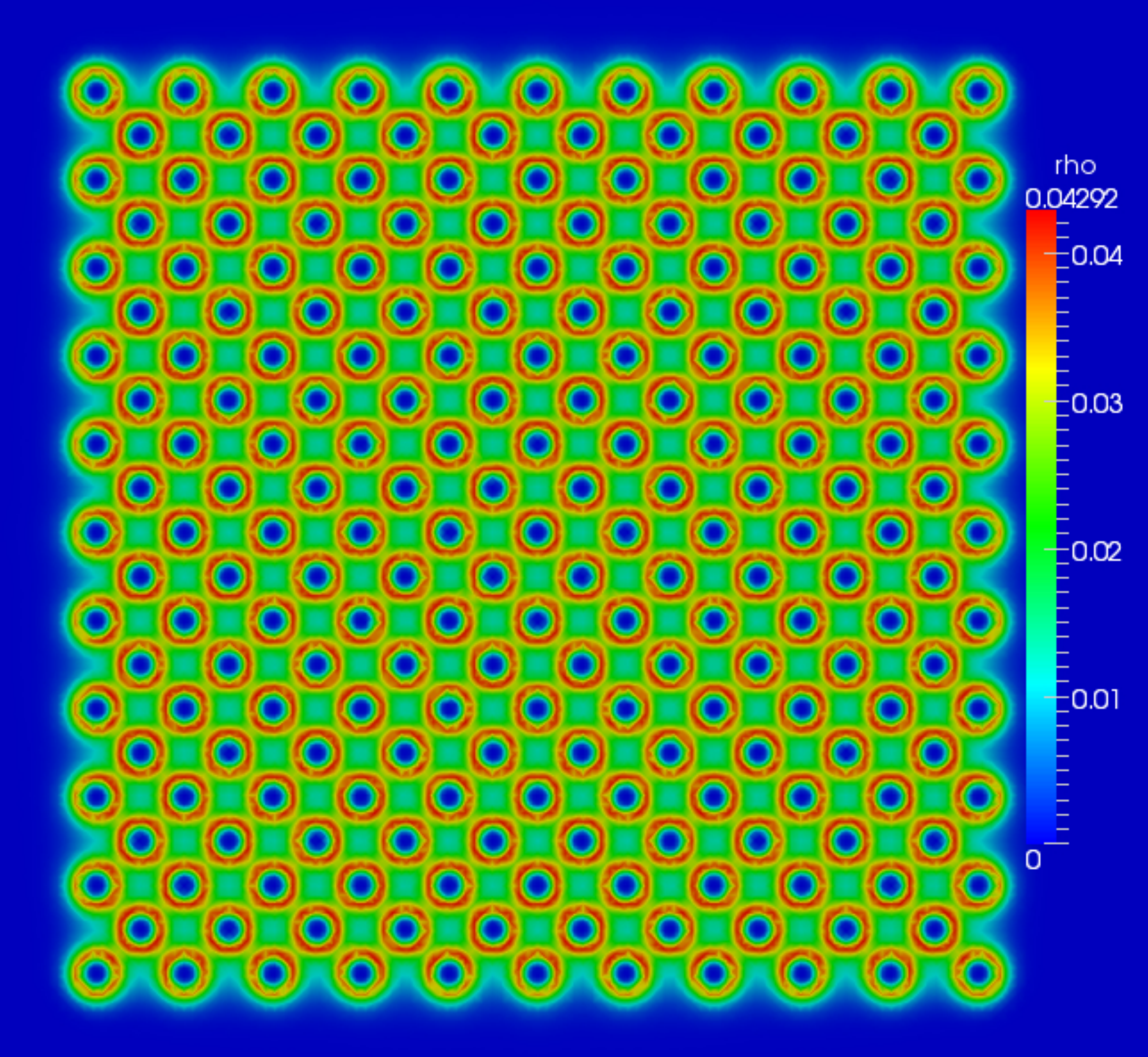}} \hskip 0.8cm
\subfigure[] {\label{fig:al4631-3.8}
\includegraphics[width=4.4cm]{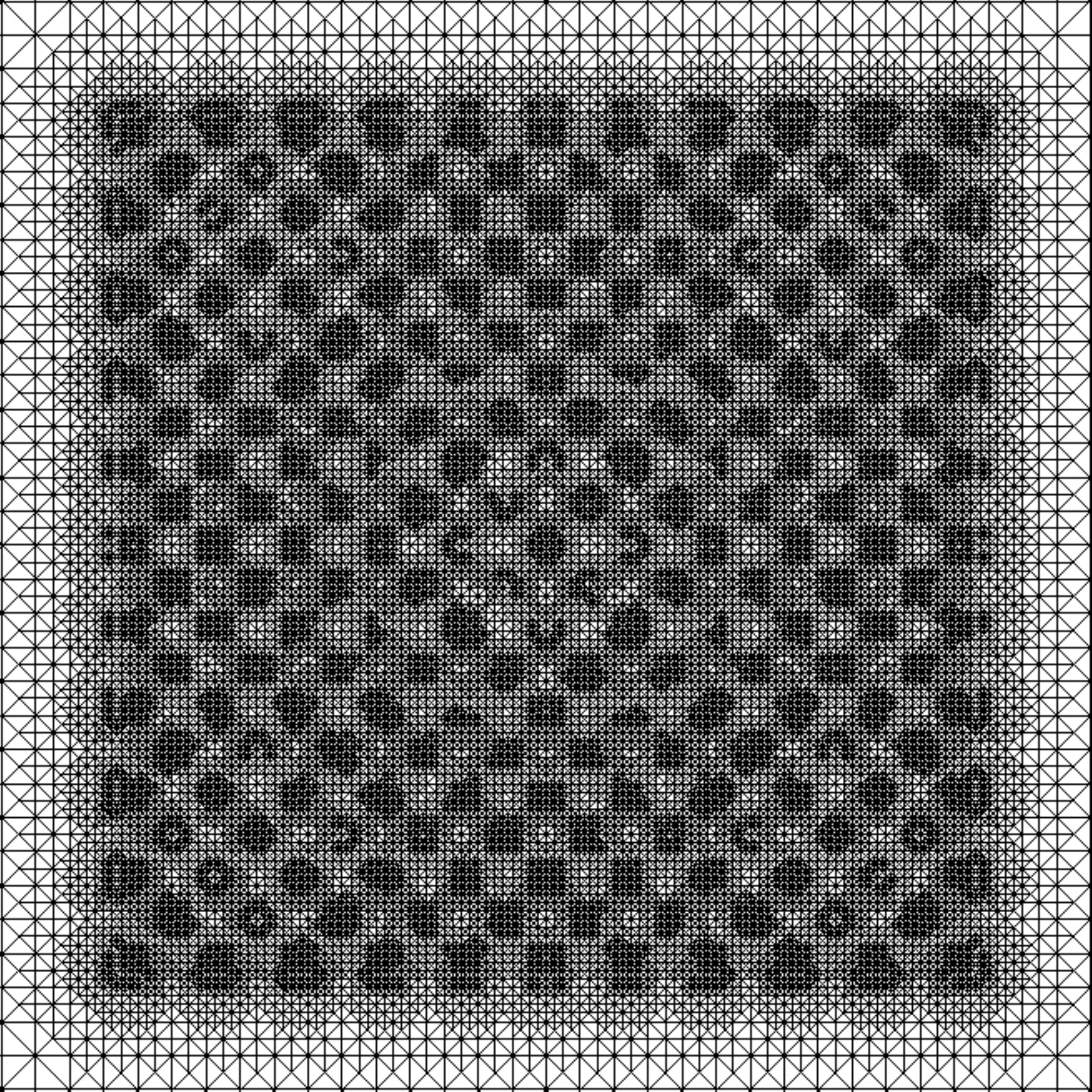}}
\caption{(a) and (c) are the contours of the ground state charge density at
plane $z=0$ for $Al_{1688}$ and $Al_{4631}$, respectively. (b) and (d) are the
corresponding
adaptive mesh distribution of (a) and (c), respectively.} \label{fig:DenDistribution}
\end{figure}



We next examine the parallel scalability of the gradient type methods.
Similar to  KSDFT, the wall clock time
is split into the T0 and T1 parts,
 where T0 includes the wall clock time on computing the total energy, gradients and
 Hamiltonian, while all other wall clock time is counted as T1. The speedup
 factors of T0 and T1 defined in \eqref{eq:speedup-factor} for systems $Al_{365}$ and
 $Al_{1688}$ are presented in  Figures \ref{fig:AlspeedupT0} and
 \ref{fig:AlspeedupT1}, respectively. We can see that the difference between
 OptM-WY and OptM-QR is small. The reason is that the parallel
 scalability of each method only depends on the spatial
degrees of freedom $n$ in the case of $p=1$. The difference between T0 and T1 is
 also small.  Consequently, the overall  parallel scalability is high. The
 performance of OptM-WY and OptM-QR is at least comparable to that of SCF.

%

\begin{figure}[h]
\begin{center}
\includegraphics[width=10cm]{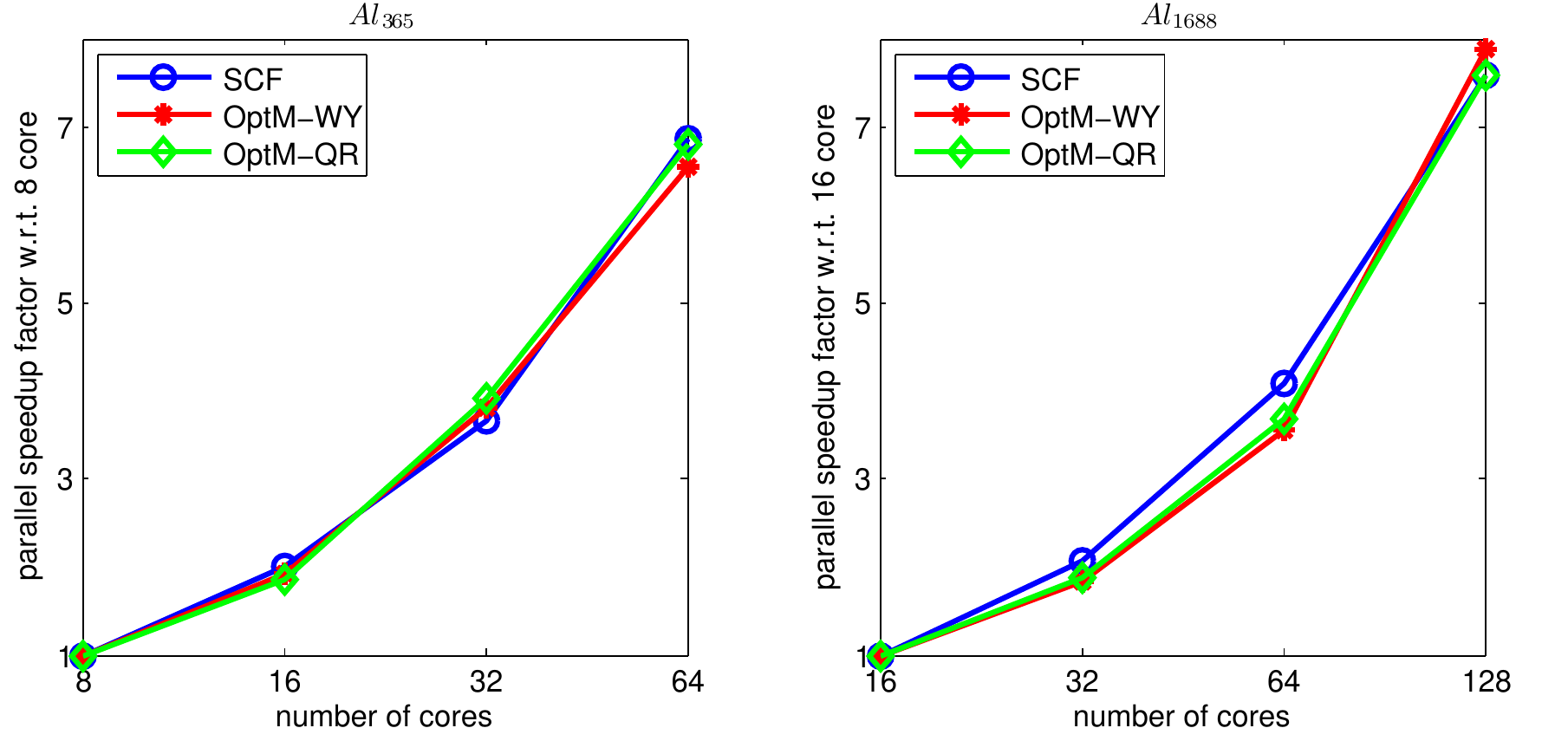}
\end{center}
\caption{The speedup factors for aluminum crystal with respect to T0.}\label{fig:AlspeedupT0}
\end{figure}

\begin{figure}[h]
\begin{center}
\includegraphics[width=10cm]{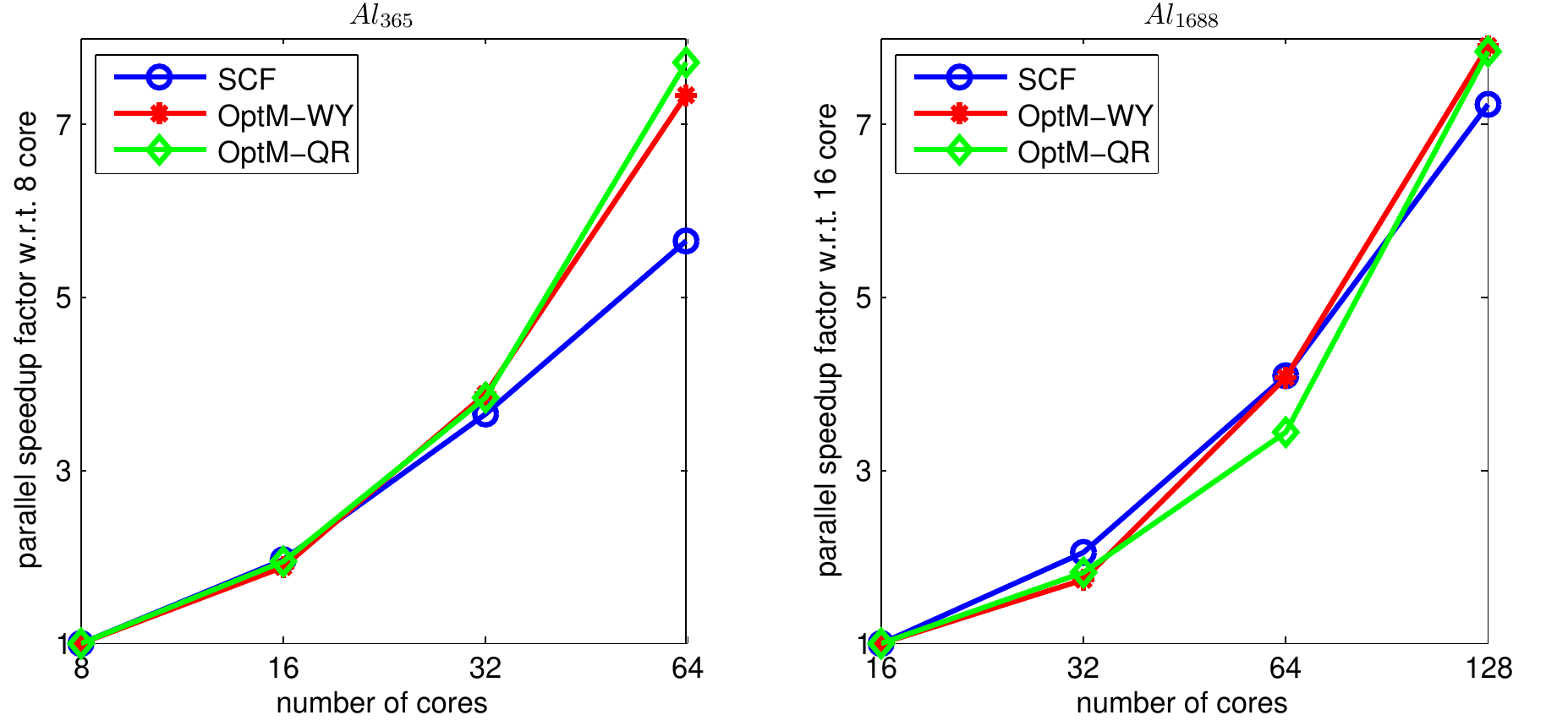}
\end{center}
\caption{The speedup factors for aluminum crystal with respect to T1.}\label{fig:AlspeedupT1}
\end{figure}

Finally, the results obtained by refining the mesh uniformly are presented in
Table \ref{table:uniform-OF}. This strategy uses the same number of meshes as
the adaptive mesh refinement method in Table \ref{table:compare-OF}, but
these meshes are refined uniformly from their coarser levels. We can conclude
from these two tables that the adaptive mesh refinement strategy
can greatly reduce the total number of degrees of freedom as well as the
computational cost.


\begin{table}
\centering
\begin{tabular}{|c|cccc|}\hline
    {solver}       & { $E^0_p$(eV)} &  $E_b$(eV)  &   {$n$}   & {cpu(s)}\\
    \hline
    \multicolumn{5}{|c|}{$size=4\times 4\times 4$\quad$N_{Al}=365$
    \quad{$cores=32$}}\\\hline
    OptM-WY       &  -57.092410     &   -4.291706 & 1432850   & 1448  \\
    OptM-QR       &  -57.092441     &   -4.291737 & 1432850   & 1288  \\\hline
    \multicolumn{5}{|c|}{$size=7\times 7\times 7$\quad$N_{Al}=1688$
    \quad{$cores=64$}}\\\hline
    OptM-WY       &  -57.151769     &   -4.351065 & 16974593  & 20688  \\
    OptM-QR       &  -57.151767     &   -4.351063 & 16974593  & 17356  \\\hline
    \multicolumn{5}{|c|}{$size=10\times 10\times 10\quad N_{Al}=4631$
    \quad{\bf $cores=128$}}\\\hline
    OptM-WY       &  -57.784717     &   -4.984013 & 37991437  & 19347  \\
    OptM-QR       &  -57.784860     &   -4.984156 & 37991437  & 18362  \\\hline
\end{tabular}
\caption{Numerical results computed by the uniformly mesh refinement method}\label{table:uniform-OF}
\end{table}

\section{Concluding remarks} \label{sec:conclusion}
In this paper, we study gradient type methods for solving the KSDFT and
OFDFT models in electronic structure calculations.
Unlike the commonly used SCF iteration, these approaches
do not rely on solving linear eigenvalue problems.
 The main components of our approaches are gradients on the Stiefel manifold and operations for preserving the
orthogonality constraints. They are cheaper and often  have better parallel
scalability than eigenvalue computation.
A specific form uses the QR factorization to orthogonalize  the gradient
step on the Stiefel manifold explicitly.  To the best of the authors' knowledge,  it is the first time that the QR-based
method is systematically studied for electronic
structure calculations.
The
gradient methods is further improved by the state-of-the-art
acceleration techniques such as Barzilai-Borwein steps and non-monotone
line search with global convergence guarantees.
We implement our methods based on
two real space software packages, i.e., Octopus for KSDFT and RealSPACES for OFDFT,
respectively. Numerical experiments show that our methods can be more efficient and robust than SCF on many
instances. Their parallel efficiency is ideal as long as the evaluation of
the total energy functional and their gradients is efficient.
We believe that
these methods are powerful techniques in simulating large and complex
systems.

\section*{Acknowledgements}
X. Zhang, J. Zhu and A. Zhou would like to thank Dr. H. Chen,
Prof. X. Dai, Dr. J. Fang, Dr. X. Gao, Prof. X. Gong and Dr. Z. Yang for their
stimulating discussions and cooperations on electronic structure calculations.
 Z. Wen would like to thank Humboldt
 Foundation for the generous support, Prof. Michael Ulbrich for
hosting his visit at Technische Universit\"at
M\"unchen and Dr. Chao Yang for discussion on Octopus.

\bibliographystyle{siam}
\bibliography{bibDFT,optimization}

\end{document}